%
%


\documentclass[12 pt, onehalfspacing]{amsart}
\usepackage{amsthm,amsmath}
\usepackage{lineno}
\usepackage{tikz}
\usepackage{amsfonts}
\usepackage{amsmath}
\usepackage{fancyhdr}
\usepackage[T1]{fontenc}
\usepackage[latin9]{inputenc}
\usepackage{mathabx}
\usepackage{color}
\usepackage{float}
\usepackage{mathrsfs}
\usepackage{layout}
\usepackage{slashed}
\usepackage{amsthm}
\usepackage{amstext}
\usepackage{amssymb}
\usepackage{stmaryrd}
\usepackage{graphicx}
\usepackage{setspace}
\usepackage[export]{adjustbox}
\usepackage{esint}
\RequirePackage[colorlinks=true,citecolor=blue,urlcolor=blue]{hyperref}
\usepackage[authoryear]{natbib}
\usepackage{ulem}
\usepackage{enumitem}
\usepackage{ifpdf}
\usepackage{pdflscape}
\usepackage[figure]{hypcap}
\usepackage{fancyhdr}
\usepackage{bbm}
\usepackage{geometry}
\usepackage[parfill]{parskip}
\allowdisplaybreaks
\geometry{verbose, top = 1.1 in, bottom = 1.1 in, left = 1.1 in, right =1.1 in}


\theoremstyle{plain}

\theoremstyle{definition}

\theoremstyle{plain}
\newtheorem{prop}{Proposition}
\theoremstyle{definition}
\newtheorem{defn}{Definition}
\theoremstyle{plain}
\newtheorem{defn*}{}
\theoremstyle{plain}

\theoremstyle{plain}
\newtheorem{lem}{Lemma}
\theoremstyle{plain}

\ifpdf 
 \IfFileExists{lmodern.sty}{\usepackage{lmodern}}{}
\fi 
\let\myTOC\tableofcontents
\renewcommand\tableofcontents{  \frontmatter
  \pdfbookmark[1]{\contentsname}{}
  \myTOC
  \mainmatter }

\usepackage{cleveref}

\setlength{\parskip}{0.6 em}

 \begin{document}

		\title{Identifying Preferences when households are financially constrained}

		\author{Andreas Tryphonides\\ University of Cyprus}

		\address{P.O. Box 20537, Nicosia, Cyprus. }

		\email{tryfonidis.antreas@ucy.ac.cy}

		\date{Current version: Jan 2023, First Version  Dec 2020.}
		\thanks{\tiny{Parts of this paper draw from chapter 2 of my PhD thesis (EUI, Sep 2016), and the paper circulated under the title "Set identified Economies and Robustness to Misspecification". I thank Fabio Canova, Peter Reinhard Hansen, Frank Schorfheide and Giuseppe Ragusa for comments and suggestions on the thesis chapter. Previous versions of the paper benefited from discussions with George Marios Angeletos,  Eric Leeper, Ellen McGrattan, Juan J. Dolado, Thierry Magnac, Manuel Arellano, Raffaella Giacomini, and comments from  participants at the SED (St Louis), MMCN (Stanford), the IAAE (Thessaloniki and Nicosia), the 30th EEA Congress (Mannheim), T2M (Banque de France), the ES Winter Meetings (Barcelona), the seminars at the University of Cyprus, Royal Holloway, Humboldt University, Universidad Carlos III, University of Glasgow, University of Southampton, the Econometrics and Applied Macro working groups at the EUI. Any errors are my own.  Financial support from the BERA network, the Institute of Economic Theory II at Humboldt University and the University of Cyprus Starting Grant is gratefully acknowledged. I do not have any financial interests/personal relationships which may be considered as potential competing interests.\\\\\textit{Email}: tryfonidis.antreas@ucy.ac.cy\\\textit{Address}: Panepistimiou 1, 2109, Nicosia, Cyprus.}}

	\setcounter{page}{1}
	\pagenumbering{arabic}
	\pagestyle{plain}
	\thispagestyle{empty}

\begin{abstract}
This paper shows that utilizing information on the extensive margin of financially constrained households can narrow down the set of admissible preferences in a large class of macroeconomic models. Estimates based on Spanish aggregate data provide further empirical support for this result and suggest that accounting for this margin can bring estimates closer to microeconometric evidence. Accounting for financial constraints and the extensive margin is shown to matter for empirical asset pricing and quantifying distortions in financial markets. 
	\bigskip \newline
{\small 	Keywords:  Bounds, Financial constraints, Extensive Margin, Asset Prices  \newline JEL Classification: C51, E44, E1, E2.} 
\end{abstract}

	\maketitle

\newpage



\section{Introduction}

%

Identifying agent preferences from aggregate data has a long tradition in macroeconomics and consumption based asset pricing. One of the key challenges in recovering preferences by using the  standard representative agent Euler equation has been the fact that households are often subject to some form of liquidity constraint. The presence of occasionally binding constraints at the household level, as well as other market imperfections, also implies that agents' intertemporal marginal rates of substitution are not equalized in equilibrium, leading to non-negligible consumption dispersion that varies over time. Both of these facts render the frictionless representative agent restriction invalid.  

There is a growing literature on estimating fully specified heterogeneous agent models with specific mechanisms for restrictions on borrowing and limited risk sharing.\footnote{See for example \citet{LiuMoller,https://doi.org/10.3982/QE740,ANDM,634145,BBL,RePEc:eee:dyncon:v:115:y:2020:i:c:s0165188920300506,mongeywilliams}.} However, a natural question to ask is whether we can still obtain reasonable estimates of preferences using aggregate restrictions akin to the standard Euler equation that are consistent with the implications of financial constraints at the household level.\footnote{The motivation is similar to the representative agent literature, that is, to draw more credible inference on preferences by utilizing a subset of model predictions such as the consumption Euler equation \citep{10.2307/1912775}. Hence, identification is not tied to a particular equilibrium or a specific structure on idiosyncratic earnings risk.} This paper illustrates that in comparison to the standard representative agent framework, accounting for the presence of  constraints generates inequalities in the aggregate Euler equation. The latter also features an aggregation wedge due to imperfect risk sharing.  

The presence of weak moment inequalities leads to set identification for preferences. The set of parameter estimates that cannot be rejected directly corresponds to the set of models that generate aggregate frictions that are compatible with financial constraints. 

The key result of this paper is that information on the extensive margin of adjustment, i.e. the proportion of agents whose behavior is distorted over time due to financial frictions, can effectively narrow down the set of admissible preferences. 
The moment inequality approach is theoretically shown to be robust to under-predicting this margin in the data. This is important as defining which group of consumers is constrained or not is subject to interpretation and data availability.
 The paper provides simulation evidence that the informativeness of this margin is not compromised by measurement error, nor by the fact that cross-sectional information about consumption may be available at a lower frequency than macroeconomic aggregates. 

 The empirical application provides further support to these results by estimating preferences using Spanish macro and aggregated micro data. The proportion of constrained consumers is measured by combining information from the Survey of Household Finances (SHF) and the Business and Consumer Survey (BCS). 
 
 The paper's results on the importance of the extensive margin for sharper identification have several implications. First, 
 it contributes to our understanding that since these models are highly non-linear, certain moments can provide information about preferences without necessarily having a first order predictive power for the level of macroeconomic time series.  This is indeed the case for the borrowing constrained, as these individuals might represent a small proportion of the population, so their consumption choices  may not substantially affect the \textit{level} of aggregate consumption, yet fluctuations in this proportion are informative for preferences.  Second, had we not accounted for the extensive margin and the growth rate of cross sectional dispersion in real consumption, preference parameters would absorb the neglected variation. This can in fact help reconcile the gap between  micro and macro estimates of elasticities, contributing to the literature that studies the possible determinants of this gap.   The result speaks to both the equity premium puzzle \citep{MEHRA1985145}, in the sense that a lower degree of relative risk aversion is needed to rationalize the co-movement of consumption growth and returns, 
 and the "Frisch elasticity puzzle" \citep{10.1257/aer.101.3.471}, which concerns the gap between labor supply elasticities estimated using individual and aggregate data. 
 
 Finally, the paper argues that improved identification is not only important from an econometric point of view, but it has implications for consumption based asset pricing as well.  
 More particularly, a useful byproduct of the employed methodology is an estimate of the implied distortions to the Euler equation. In a counterfactual experiment, I show that shutting down the estimated distortions renders the  representative agent model unable to rationalize the observed equity premium since 2001, highlighting the importance of borrowing constraints and limited risk sharing. Moreover, an accounting exercise in the spirit of \citet{ECTA:ECTA768} reveals that trading frictions e.g. restrictions on short selling seem to dominate transaction costs in  distorting capital markets. The sharper estimates obtained using the extensive margin contributes to the precision of this type of inference as well.

\subsection{Contribution and contact with the literature}
The paper's methodology relates to the debate on estimating preferences using Euler equations.  \citet{ DeathtotheLogLinearizedConsumptionEulerEquationAndVeryPoorHealthtotheSecondOrderApproximation} posed a challenge as to whether the log-linear Euler equation can be consistently estimated at all, while \citet{ATTANASIO2004406} argued that by using time series variation (at the individual level) one can obtain good estimates as long as the growth of consumption variance is unrelated to the instruments used. The latter paper also finds that the estimation of non-linear Euler equations is as bad as in the case of the log-linearized Euler equation when the discount rate is too low, that is, when agents become impatient such that they hit their borrowing constraints. The methodology built in this paper contributes to tackling this issue by recognizing that the presence of borrowing constraints leads to moment inequalities instead of equalities, and that information on the probability of being constrained can sharpen identification. This holds true both in the case of using individual micro data as well as aggregate macro data. The main exposition and the application of the paper focus on the latter case, which speaks directly to the macroeconomic literature that has used the aggregate Euler equation to identify preferences. This is also less demanding in terms of data as it only requires observing households' consumption growth for a repeated cross section, while measuring the extensive margin  does not require observing the same households over time.\footnote{In order to show that moment inequalities could also be used in a microeconometric settings, Appendix \hyperref[AppB]{B} contains a fully articulated analytical example  of identification using individual time series, abstracting from unobserved heterogeneity and measurement error, which are important yet separate issue. Observing individual time series is nevertheless more demanding in terms of data availability.}

Moment inequalities have been used in the literature to characterize certain types of financial market frictions.
\citet{1996} and \citet{10.2307/2138720} derived bounds on the stochastic discount factor that are consistent with short sale constraints, proportional transaction costs and borrowing constraints but rely on complete markets and hence inequality restrictions on the representative agent Euler equation.  This paper relaxes the complete market assumption and hence the cross sectional distribution of consumption adds important identifying information. In addition, the paper focuses on preference identification and not on the non-parametric identification of stochastic discount factors \citep{doi:10.1086/261749}. Nevertheless, the informativeness of the extensive margin of constrained consumers does not hinge on the chosen class of stochastic discount factors.  
  
In the empirical literature, accounting for borrowing constraints in the aggregate Euler equation has been mostly operationalized by looking at simple forms of heterogeneity such as two agent models. For example, in recent work by \citet{ASCARI2021129}, one of the extensions of the basic representative agent Euler equation that were considered was the case of a constant fraction of hand to mouth consumers.\footnote{For a strand of the literature that follows two agent route see  \citet{RePEc:eee:dyncon:v:36:y:2012:i:11:p:1659-1672,RePEc:eee:jetheo:v:140:y:2008:i:1:p:162-196,RePEc:tpr:jeurec:v:5:y:2007:i:1:p:227-270,RePEc:pav:wpaper:124}.} This paper accommodates for a wider set of deviations that could be consistent with alternative reasons for which financial constraints break the strong correlation between consumption growth and the interest rate. Moreover, in line with the conclusions of \citet{ASCARI2021129}, the proposed methodology shows one way in which aggregated microeconomic information can be combined with traditional macroeconomic aggregates for improving identification.


Partial identification approaches have been successfully applied in a variety of  structural settings, from discrete choice in single agent models and games with multiple equilibria, to auctions and networks.\footnote{For a more comprehensive list of applications of partial identification see \citet{RePEc:ifs:cemmap:25/19} and \citet{ ho_rosen_2017}.} More specifically, \citet{ECTA:ECTA1480} have used moment inequality restrictions that arise from optimizing behavior in a single agent setting and best response behavior in multi-agent settings. Similarly, \citet{10.1093/restud/rdz007} derive moment inequalities from one-period deviations from observed export behavior of firms. The nature of these moment inequalities is nevertheless different than the ones proposed here as they arise \textit{due} to revealed preference over counterfactual choices and not due to some form of market imperfection. In the structural macroeconometric literature, moment inequalities have been used in different contexts. More particularly, \citet{https://doi.org/10.3982/QE978} treat sign restrictions in structural vector autoregressions as moment inequalities, while \citet{coroneo2011testing} use moment inequalities to test for optimal monetary policy. The objective is nevertheless the same, that is, to make inference robust to certain modeling choices.

Existing work in the microeconometric literature has also tried to disentangle identification of preferences from the particular formulation of risk sharing possibilities, but differ from this paper in terms of the restrictions they impose and the type of information they require. In order to obtain analytical tractability and point identification, \citet{10.1257/aer.104.7.2075} construct an equilibrium in which agents are either perfectly insured against some risks or not insured at all, leading to an inoperative bond market, extending the work of \cite{RePEc:ucp:jpolec:v:104:y:1996:i:2:p:219-40}. This paper differs by not imposing a particular equilibrium while its exploits aggregate information for identification. Similarly, this paper differs from  \citet{10.1257/aer.98.5.1887, 10.1257/aer.20121549} as it accounts for the possibility of binding liquidity constraints.


The rest of the paper is organized as follows. Section 2 introduces the class of models under consideration and the identification conditions. Section 3 provides identification analysis and simulation evidence, Section 4 presents the empirical application and Section 5 concludes. Appendix \hyperref[AppA]{A} contains the following material: Benchmark model extensions that account for multiple assets and transaction costs, as well as identifying conditions that account for some forms of ex-ante heterogeneity, auxiliary lemmas and proofs, additional simulation evidence on the performance of the proposed approach, the empirical approach to extract the proportion of constrained consumers, and details on the data. Appendix \hyperref[AppB]{B} contains an extended analysis of identification based on the moment inequality framework that can be applied to microeconomic data and shows analogous conditions under which the extensive margin is informative in that case as well.

\section{Model Setup}

This section outlines a class of models that is going to form the basis of the proposed identifying restrictions. A direct extension that incorporates multiple assets and transaction costs and leads to the same type of restrictions can be found in Appendix \hyperref[AppA]{A}. This Appendix also contains identifying restrictions that cater for certain forms of heterogeneity in  preferences and expectations.

\subsection{Household Optimization}
The model features households maximizing expected discounted lifetime utility subject to  an intertemporal budget constraint. Each household receives total income $y_{i,t}(l_{i,t})$ which may depend on labor hours supplied ($l_{i,t}$) and makes consumption $(c_{i,t})$, labor supply and investment decisions.\footnote{Notice also that we do not specify a specific process for labor income risk. It can therefore include any kind of individual specific risk to income as well as any kind of transfers.} The household chooses to store wealth in asset  $a_{i,t}$ which earns gross returns equal to $R_{t}=1+r(a_{i,t},S_{t})$, where $S_{t}$ is the vector of aggregate states. 
Furthermore, there is a general constraint technology $G_{i}(.)$ that restricts trades of this asset, which is increasing in the level of wealth.\footnote{This includes popular restrictions e.g. on short selling and non-collateralized borrowing  (e.g. $a_{t+1}\geq -\underline{a}$). The multiple asset case in Appendix \hyperref[AppA]{A} can accommodate more general types of constraints such as collateralized borrowing (e.g. $ a^{j}_{t+1}\geq - \sum^{J}_{k\neq j}w_{k} a^{k}_{t+1}$ or $a^{j}_{t+1}\geq - \sum^{J}_{k\neq j}w_{k} a^{k}_{t}$) where $w_{k}$ summarize restrictions on quantities across assets $j=1..J$.} 

Let $\beta\in (0,1)$ be the discount factor and $U(c_{i,t},l_{i,t};\omega,S_{t})$ the instantaneous utility function, where $\omega$ signifies parameters that are related to preferences (e.g. risk aversion, labor dis-utility).\footnote{Aggregate states could also directly affect utility. For example,  with external habits past aggregate consumption, $C_{t-1}$, belongs to $S_{t}$.} The  household problem is as follows: \vspace{-0.1 in}
\begin{eqnarray*}
	&&\max_{\{c_{i,t},l_{i,t},a_{i,t+1}\}}\mathbb{E}_{0}\sum_{t=0}^{\infty }\beta
	^{t}U(c_{i,t},l_{i,t};\omega,S_{t})\\
	s.t. &  c_{i,t} +a_{i,t+1}   &=  y_{i,t}(l_{i,t}) +\left(1+r(a_{i,t},S_{t})\right)a_{i,t}  \\ 
	& c_{i,t}>0&\text{, } G_{i}\left(a_{i,t+1},S_{t},S_{t+1}\right)\geq 0   \label{eq:model}
\end{eqnarray*} 
Denoting the marginal utility of consumption by $U'_{1}(c_{i,t};l_{i,t},\omega,S_{t})$, 
 the Euler equation is distorted by the non-negative Lagrange multiplier on the
occasionally binding constraint $(\mu_{i,t})$ which is positive when the constraint is binding:\footnote{We denote expectations conditional on information at time $t$ by $\mathbb{E}_{t}$.}
\footnotesize
\begin{eqnarray}
 U'_{1}(c_{i,t},l_{i,t};\omega,S_{t}) & = & \beta \mathbb{E}_{t}\left(1+\underset{}{r(a_{i,t+1},S_{t+1}) +\frac{\partial r(a_{i,t+1},S_{t+1})}{\partial a_{i,t+1}}a_{i,t+1}}\right)U'_{1}(c_{i,t+1},l_{i,t+1};\omega,S_{t+1}) \nonumber\\  && +\mu_{i,t}\frac{\partial G_{i}\left(a_{i,t+1},S_{t},S_{t+1}\right)}{\partial a_{i,t+1}}\nonumber 
\end{eqnarray}\normalsize
This can be simplified to  \small
\begin{eqnarray}
\label{eq:genzeldes} 
U'_{1}(c_{i,t},l_{i,t};\omega,S_{t})& = &
\beta \mathbb{E}_{t}\left(1+r^{e}(a_{i,t+1},S_{t+1})\right)U'_{1}(c_{i,t+1},l_{i,t+1};\omega,S_{t+1})+ \lambda_{i,t} 
\end{eqnarray}
\normalsize

\normalsize where $r^{e}$ is the effective rate of return, $r^{e}:=\underset{}{r(a_{i,t+1},S_{t+1}) +\frac{\partial r(a_{i,t+1},S_{t+1})}{\partial a_{i,t+1}}a_{i,t+1}}$,  and $\lambda_{i,t}$ summarizes the distortion which is positive as $\mu_{i,t}$ is positive and $G_{i}$ is increasing in wealth.
For example, in an economy with a risk free government bond, \eqref{eq:genzeldes} becomes:
\[U'_{1}(c_{i,t};l_{i,t},\omega,S_{t})=
\beta \mathbb{E}_{t}(1+r)U'_{1}(c_{i,t+1};l_{i,t+1},\omega,S_{t+1})\nonumber + \lambda_{i,t} \]  
Similar optimality conditions hold in the case in which the household makes its optimal portfolio choice in the multiple asset case.\footnote{Correspondingly, the optimality condition for hours is $\frac{\partial y_{i,t}(l_{i,t})}{\partial l_{i,t}}U'_{1}(c_{i,t},l_{i,t};\omega,S_{t})=U'_{2}(c_{i,t},l_{i,t};\omega,S_{t})$.}
As the paper focuses on the use of aggregate data for identification, I next derive the corresponding restrictions in terms of aggregates, which are directly comparable to the representative agent Euler equation. 
\subsection{Aggregation}
In order to be explicit about aggregation, let $s_{i,t}\equiv\left(y_{i,t},a_{i,t}\right)$ be the vector that includes all idiosyncratic states and $S_{t}$ the vector that includes all aggregate states, including the distribution of $s_{i,t}$,  $p(.|S_{t})$.\footnote{In a multiple-asset economy, $s_{i,t}$ would include all assets i.e. $s_{i,t}\equiv\left(y_{i,t},\left\{a^{j}_{i,t}\right\}_{j=1}^{J}\right)$.}
Dividing and multiplying individual marginal utility by the corresponding aggregate marginal utility yields that
\small
\begin{eqnarray}
	\Xi_{i,t} U'_{1}(C_{t},L_{t};\omega,S_{t})& = &
	\beta\mathbb{E}_{t}\left(1+r^{e}(a_{i,t+1},S_{t+1})\right){\Xi_{i,t+1}}{}U'_{1}(C_{t+1},L_{t+1};\omega,S_{t+1}) + {\lambda}_{i,t} 
\end{eqnarray}
\normalsize
where $\Xi_{i,t}:=\frac{U'_{1}(c_{i,t},l_{i,t};\omega,S_{t})}{U'_{1}(C_{t},L_{t};\omega,S_{t})}$ is the aggregation residual. 
Aggregating the Euler equation using $p(s_{i,t}|S_{t})$, and imposing that agent expectations are formed using $p(s_{i,t+1},S_{t+1}|s_{i,t},S_{t})$, 
\small
\begin{eqnarray}
	&& \int \Xi_{i,t}p(s_{i,t}|S_{t})d s_{i,t}  U'_{1}(C_{t},L_{t};\omega,S_{t}) \nonumber \\
	& = &
	\beta\mathbb{E}_{t}\left[\int\left(1+r^{e}(a_{i,t+1},S_{t+1})\right){\Xi_{i,t+1}}p(s_{i,t+1},s_{i,t}|S_{t+1},S_{t})d (s_{i,t+1},s_{i,t})U'_{1}(C_{t+1},L_{t+1};\omega,S_{t})\right] \nonumber \\
	&&	+ \int {\lambda}_{i,t}p(s_{i,t}|S_{t})d s_{i,t}  \label{eq:aggeul}
\end{eqnarray} \normalsize More compactly, \eqref{eq:aggeul} may be reexpressed as follows: \small
\begin{eqnarray}
	&& \Xi_{t} U'_{1}(C_{t},L_{t};\omega,S_{t}) \nonumber \\
	& = &
	\beta\mathbb{E}_{t}\left[\left(\left(1+r^{e}_{t+1}\right){\Xi_{t+1}} + \mathbb{C}ov_{t}(r^{e}_{i,t+1},\Xi_{i,t+1})\right)U'_{1}(C_{t+1},L_{t+1};\omega,S_{t+1})\right]	+ \mu_{t} \label{eq:geneul}
\end{eqnarray}
\normalsize
where  $\Xi_{t}\equiv\int \Xi_{i,t}p(s_{i,t}|S_{t})d s_{i,t}$, $\Xi_{t+1}\equiv\int \Xi_{i,t+1}p(s_{i,t+1},s_{i,t}|S_{t+1},S_{t})d  (s_{i,t+1},s_{i,t})$, \\ $\mu_{t}\equiv \int \lambda_{i,t} p(s_{i,t}|S_{t})d s_{i,t} $ and $r^{e}_{t+1}:=\int  r^{e}_{i,t+1}p(s_{i,t+1},s_{i,t}|S_{t+1},S_{t})d (s_{i,t+1},s_{i,t})$, with  $r^{e}_{i,t+1}\equiv r^{e}(a_{i,t+1},S_{t+1})$. Moreover, the cross sectional covariance between asset returns and the aggregation residual, $\mathbb{C}ov_{t}(r^{e}_{i,t+1},\Xi_{i,t+1})$, is computed using $p(s_{i,t+1},s_{i,t}|S_{t+1},S_{t})$.
\normalsize
Hence, the aggregate Euler equation  is generally distorted by dispersion in marginal utilities, its covariance with  asset returns and the aggregate multiplier.

 \section{Preference Identification and the Extensive Margin}
  Given the model setup and the corresponding individual and aggregate conditions, in this section I focus on what can be learned from the data in terms of the preference parameters $(\beta,\omega)$.  There are alternative ways of achieving identification. Splitting the sample into unconstrained and constrained households using some criterion has been predominantly used in studies employing cross sectional or panel data. Using just the unconstrained households amounts to setting $\lambda_{i,t}=0$ in \eqref{eq:genzeldes}. \citet{Zeldes1989} splits the sample using a cut-off rule on the end  of period nonhuman wealth and discusses different rules that could have been used, as well as liquidity constraints arising from credit market imperfections and illiquid assets as e.g. in \citet{doi:10.2307/1884163}. The importance of illiquid assets has been recently revived by the heterogeneous agent literature, see for example \citet{RePEc:bin:bpeajo:v:45:y:2014:i:2014-01:p:77-153} and references therein, which also identifies groups of wealthy and poor hand to mouth consumers using different rules. Similar reasoning, in a slightly different setting,  is followed 
 by \cite{RePEc:nbr:nberwo:26032} which identifies the discount factor wedge of the unconstrained households by looking at high income-low net worth households.
 
  Alternative sample-selection choices can nevertheless lead to whimsical conclusions on whether liquidity constraints are present.\footnote{Even if the selection rule identifies unconstrained individuals without error, another issue is that the moment conditions imply that the expectation error averages to zero over time, and not necessarily over the cross section (see e.g. \citet{ATTANASIO2004406} and references therein). This renders the use of cross sectional variation for preference identification less desirable. As  discussed in \citet{10.2307/2938190}, this is less of an issue when markets are complete, which is not true in our setting.}   Moreover, this approach cannot be readily applied to time series data, whether individual or aggregate.
  A more credible way to identify $(\beta,\omega)$ is to accommodate that $\lambda_{i,t}$, and therefore $\mu_{t}$, are weakly positive, that is, there is a weakly positive discrepancy between current and next period marginal utility of consumption, as households cannot smooth consumption as much as they desire. As we already alluded to, this is a  general implication of the trading constraints which is valid across models with different types of insurance mechanisms. 
 
 Conditions in \eqref{eq:genzeldes} and \eqref{eq:geneul} therefore generate conditional moment inequalities.  
  Focusing on the aggregate restriction, it follows that using any instrument that is positive i.e. an observed aggregate random variable  $(X_{t})$ that belongs to the information set of the agents, gives rise to an unconditional moment inequality: 
 \small
 \begin{eqnarray}
 	\mathbb{E}\left(1-\beta\left[\left(1+r^{e}_{t+1}\right)\frac{\Xi_{t+1}}{\Xi_{t}} + \frac{1}{\Xi_{t}}\mathbb{C}ov_{t}\left(r^{e}_{i,t+1},\Xi_{i,t+1}\right)\right]\frac{U'_{1}(C_{t+1},L_{t+1};\omega,S_{t+1})}{U'_{1}(C_{t},L_{t};\omega,S_{t})}\right)X_{t}&\geq& 0 \label{eq:geneul1}
 \end{eqnarray}
 \normalsize
The inequality does not pin down a unique vector of values for $(\beta,\omega)$, however, it implicitly defines bounds on admissible values for these parameters.\footnote{The corresponding Euler equation at the household level, interacted with a positively valued instrument $(z_{i,t})$, would yield the following unconditional moment inequality:
 \begin{eqnarray*}
 	\mathbb{E}\left[(U'_{1}(c_{i,t};l_{i,t},\omega,S_{t})-\beta(1+r^{e}_{i,t+1})U'_{1}(c_{i,t+1};l_{i,t+1},\omega, S_{t+1})) z_{i,t}\right] & \geq & 0 
 \end{eqnarray*} 
 For illustration purposes, in Appendix \hyperref[AppB]{B} I derive explicit identification regions for risk aversion $\gamma\in \omega$ using time series observations for household $i$'s consumption and assuming a constant real interest rate.}

Furthermore, implementing this identifying condition requires the choice of a particular asset.   We can in principle use any asset class, conditional on being able to measure the relevant components of \eqref{eq:geneul1}.   
  For some assets, this may be a less trivial task as  it can require assumptions on the structure of the return of the specific asset and additional cross sectional moments that will have to be identified from the data. 
 For example, the aggregate Euler equation that corresponds to a productive asset such as capital with a depreciation rate equal to $\delta$ yields the condition
 \small \begin{eqnarray*}
 	\mathbb{E}\left(1-\beta\left[\left(1-\delta + r^{k}_{t+1}\right)\frac{\Xi_{t+1}}{\Xi_{t}} + \frac{1}{\Xi_{t}}\mathbb{C}ov_{t}\left(r^{k}_{i,t+1}-\delta,\Xi_{i,t+1}\right)\right]\frac{U'_{1}(C_{t+1},L_{t+1};\omega,S_{t+1})}{U'_{1}(C_{t},L_{t};\omega,S_{t})}\right)X_{t}\geq 0 
 \end{eqnarray*} 
 \normalsize
 Empirically implementing this requires cross sectional data on firms' marginal product of capital $(r^{k}_{i})$. On the other hand,  applying the above restriction to the case of a government bond with return $r^{g}$, yields that\small
  \begin{eqnarray}
 	\mathbb{E}\left(1-\beta\left[\left(1+r^{g}_{t+1}\right)\frac{\Xi_{t+1}}{\Xi_{t}}\right]\frac{U'_{1}(C_{t+1},L_{t+1};\omega,S_{t+1})}{U'_{1}(C_{t},L_{t};\omega,S_{t})}\right) X_{t}\geq 0  \label{eq:eulerbond}
 \end{eqnarray} \normalsize which is easier to implement.
 
 \normalsize
Given these moment inequalities, I next discuss identification, and how  information on the extensive margin of adjustment for the constrained consumers can narrow down the set of admissible preferences.\footnote{Identification without parametric assumptions on the utility function is beyond the scope of the paper as the aim is to be robust with respect to the mechanism generating $\mu_{t}$ and not the primitives. In a frictionless setting, nonparametric global point identification has been proved by \citet{escanciano_hoderlein_lewbel_linton_srisuma_2020} within the class of positive marginal utility functions. For a more general class of utility functions they show that both the discount factor and the marginal utility functions are set identified. Set identification in our context has to do with the fact that the parametric Euler equation is not satisfied with equality.} 
While identification is based on the original non-linear Euler equation, some intuition can be conveyed through the lens of the log-linearized (around the steady state) condition as well.

What generates the inequality in the distorted Euler equation is the aggregate multiplier $\mu_{t}$, which is an endogenous and latent random variable.  
If $\mu_{t}$ was somehow observed over time, then \eqref{eq:geneul} would point identify preferences under standard regularity conditions.
For example, if we could measure $\mu_{t}$ in the bond Euler equation, assuming constant relative risk aversion preferences (CRRA)  with the risk aversion parameter denoted by $\gamma$, the log-linearized condition would point identify this parameter as the solution to  \small
\begin{eqnarray} \gamma&=&\frac{Cov(\tilde{R}^{g}_{t+1},\tilde{R}^{g}_{t}) + Cov(\Delta\tilde{\Xi}_{t+1}(\gamma),\tilde{R}^{g}_{t})+\frac{1-\beta \bar{R}^{g}}{\beta \bar{R}^{g}}\left(Cov(\tilde{\mu}_{t},\tilde{R}^{g}_{t})\right)}{Cov(\Delta\tilde{C}_{t+1},\tilde{R}^{g}_{t})}\label{eq:intui}
\end{eqnarray}\normalsize
where I have used the gross rate as an instrument ($\tilde{X}_{t}:=\tilde{R}^{g}_{t}$) and $\beta$ as known. 

The above expression serves to show how risk aversion is pinned down in terms of potentially observable moments. It is therefore intuitive that any information about $\mu_{t}$ is going to contribute towards identification. The fact that it can only take positive values is one kind of restriction on its domain of variation, which leads to the moment inequalities already discussed. Any improvements in identification should therefore come from additional information on $\mu_{t}$. 

One way of achieving this is to notice that the evolution of $\mu_{t}$ over time can be explained by either changes in the proportion of agents whose behavior is  distorted, the extensive margin, and by how much they distort their behavior, on average. More particularly,  $\mu_{t}$ is by definition multiplicatively decomposed  as follows: 
\begin{eqnarray*}
	\mu_{t}& =& \underset{\kappa_{t}}{\underbrace{\mathbb{E}_{t}(\lambda_{i,t}|\lambda_{i,t}>0 )}}\underset{B_{t}}{\underbrace{\mathbb{P}_{t}(\lambda_{i,t}>0)}}
\end{eqnarray*}
where $\kappa_{t}$ is the intensive margin and $B_{t}$ the extensive margin.   
These margins can always be defined for a broad range of models. If we could find a way to measure either margin, e.g. $B_{t}$, then this could be utilized to sharpen inference. The following section provides the corresponding identification analysis.

Finally, note that \eqref{eq:intui} is also useful to understand that the presence of  distortions $\Delta \tilde{\Xi}_{t}$ and $\tilde{\mu}_{t}$ can potentially explain why using the frictionless Euler equation to identify preferences could lead to substantially different estimates and conclusions. I will revisit this issue in the empirical application.

\subsection{Identification analysis} 
For ease of exposition, I briefly  lay out the identification framework based on moment inequalities, while the more technical material can be found in  Appendix \hyperref[AppA]{A} (section \ref{IdRes}). 

Since the model results in distortions that are partially observable, I generalize notation as follows: Denote the vector of observables by $Y_{t}$ and the representative agent moment function by $g\left(Y_{t},Y_{t+1},\theta\right)$ where $\theta$ are the parameters of interest. Moreover, let  $ \lambda^{o}_{t}$ be the observable component of the aggregate distortion and $\lambda^{un}_{t}$ the unobservable component. For instance, in the Euler equation that corresponds to the government bond, \eqref{eq:geneul} would correspond to  $g\left(Y_{t},Y_{t+1},\theta\right)\equiv 1- \beta \left(\frac{U'_{1}(C_{t+1},L_{t+1};\omega,S_{t+1})}{U'_{1}(C_{t},L_{t};\omega,S_{t})}\right)(1+r^{g}_{t+1})$, $\lambda^{o}_{t}\equiv  \beta \left(\frac{U'_{1}(C_{t+1},L_{t+1};\omega,S_{t+1})}{U'_{1}(C_{t},L_{t};\omega,S_{t})}\right)\left(\frac{\Xi_{t+1}}{\Xi_{t}}-1\right)(1+r^{g}_{t+1})$ and $\lambda^{un}_{t}\equiv \frac{\mu_{t}}{\Xi_{t}U'_{1}(C_{t},L_{t};\omega,S_{t})}$ where $\theta=(\beta,\omega)$.   
  Using that $\mu_{t}$ is positive, and assuming additive separability in $\lambda^{0}_{t}$, the estimating equation becomes as follows, where  $X_{t}$ is any positive random variable  which is $Y_{t}$ measurable and $\mu:=\mathbb{E}(\lambda^{un}_{t}X_{t})\in \left[ 0,\infty \right]$\footnote{Note that $[0,\infty]$ is short for $\mathbb{R}_{+}\cup\{\infty\}$.}  :	\begin{eqnarray}
  \mathbb{E}\left[\left(g\left(Y_{t},Y_{t+1},\theta\right)-\lambda^{o}_{t}\right)X_{t}\right] - \mu& = & 0   \label{eq:no_surv_mom}
  \end{eqnarray}
  Furthermore, using that $B_{t}$ and $\kappa_{t}$ have the same sign:\footnote{Appendix \hyperref[AppA]{A} (Section \ref{proofs}) includes a formal proof of conditions \eqref{eq:no_surv_mom} and \eqref{eq:surv_mom}  .}
 	\begin{eqnarray}
	\mathbb{E}\left[\left(g\left(Y_{t},Y_{t+1},\theta\right)-\lambda^{o}_{t}\right)X_{t}B^{-1}_{t}\right]- \kappa& = & 0   \label{eq:surv_mom}
	\end{eqnarray}	
	where $\kappa:=\mathbb{E}(\lambda^{un}_{t}B^{-1}_{t}X_{t})\in \left[ 0,\infty \right]$. Notice that when $B_{t}$ is zero for all $t$, then \eqref{eq:surv_mom} is trivially satisfied for all $\theta$ and \eqref{eq:no_surv_mom} becomes a moment equality $(\mu=0)$. 
%

Similarly, utilizing more than one asset leads to a vector valued function $g(.)$. Conditions like \eqref{eq:no_surv_mom} generalize as follows: \small
\begin{eqnarray}
\mathbb{E}\left[\left(\underset{\mathfrak{m}\times 1}{g}\left(Y_{t},Y_{t+1},\theta\right)-{\underset{\mathfrak{m}\times 1}{\lambda_{t}^{o}}}\right)\otimes\phi
(\mathbf{Y}^{\tau})\right]&=&\mathbb{E}\left[\lambda^{un}_{t}\otimes\phi
(\mathbf{Y}^{\tau})\right] \equiv U \in \left[ 0,\infty \right]^{\mathfrak{m}}  \label{mom_comp_gmm}
\end{eqnarray}
\normalsize where $\mathbf{Y}^{\tau}$ denotes a matrix with observations of vector ${Y}$ up to time $\tau$, $\{Y_{j}\}_{j\leq \tau}$,   $\phi (.)$ is an \textit{inequality preserving} function of $\mathbf{Y}^{\tau}$, $\mathfrak{m}$ is the number of conditions and $U$ is a vector of set valued nuisance parameters that restore the moment equality.  
The identified set $(\Theta_{I})$ is thus defined by:
\begin{eqnarray*}
\Theta_{I}&:=&\left\{\theta\in \Theta: \exists U \in \left[ 0,\infty \right]^{\mathfrak{m}}  ; \mathbb{E}\left(m(Y_{t},Y_{t+1},\theta)\otimes\phi(\mathbf{Y}^{t})\right)-U=0\right\} 
\end{eqnarray*} where 
 $m(Y_{t},Y_{t+1},\theta):=\left(g\left(Y_{t},Y_{t+1},\theta\right)-\lambda^{o}_{t}\right)$ or when the extensive margin is utilized as in \eqref{eq:surv_mom}, $m(Y_{t},Y_{t+1},\theta):=\left(g\left(Y_{t},Y_{t+1},\theta\right)-\lambda^{o}_{t}\right)\otimes B^{-1}_{t}$.

\normalsize

Identification based on these moment conditions can be analyzed by first looking at sufficient information that can identify $\theta$ for a fixed value of the wedge $U$, and then how additional information  imposes restrictions on the properties of the wedge. 

 Moment functions with information on the extensive margin can always be decomposed into the component that corresponds to the original set of conditions and the component that provides additional information. For instance, when $\mathfrak{m}=n_{\theta}=1$,\small
	\begin{eqnarray}
	\left(g\left(Y_{t},Y_{t+1},\theta\right)-\lambda^{o}_{t}\right)B^{-1}_{t} & =& \underset{m_{1}}{\underbrace{g\left(Y_{t},Y_{t+1},\theta\right)-\lambda^{o}_{t}}} + \underset{m_{2}}{\underbrace{\left(g\left(Y_{t},Y_{t+1},\theta\right)-\lambda^{o}_{t}\right)(B^{-1}_{t}-1)}}\label{eq:decomp}
\end{eqnarray} \normalsize
The moment inequality that employs information on the extensive margin subsumes the moment inequality that does not employ this information.  Since employing $m_{1}$ (that is, \ref{eq:no_surv_mom}) already set identifies $\theta$, the question is whether the additional term provides more information. To show this, I exploit the additive structure of the refined moment, that is, the conditional expectation of both $m_{1}$ and $m_{2}$ is positive. This structure is similar to the one that would result from the first order conditions of over-identified GMM, had we used $m_{1}$ and $m_{2}$ to construct two separate moment inequalities. Lemma \ref{nr} and \ref{over} in Appendix \hyperref[AppA]{A} (section \ref{IdRes} ) characterize identification in such a case.
 
It is important to stress that \eqref{eq:surv_mom} is a refinement of \eqref{eq:no_surv_mom}, not an additional moment condition. Hence its informativeness does not hinge on it being used together with \eqref{eq:no_surv_mom}, but from measuring part of the latent wedge in the Euler equation. 

The next proposition  utilizes Lemmas \ref{nr} and \ref{over} and characterizes how informative $B_{t}$  can be. In a nutshell, if the extensive margin constrains the 
 stochastic properties of the wedge $\lambda^{un}_{t}$ (and hence $\mu_{t}$ and $U$), then the size of $\Theta_{I}$ is likely to be refined, in population. 
The two main assumptions made by Lemmas \ref{nr} and \ref{over} are that the moment conditions used are correctly specified as equalities when nobody is constrained, and that conditional on a specific value for the distortion $U$, $\theta$ is point identified.\footnote{For the formal definitions, please see  Appendix \hyperref[AppA]{A} (section \ref{IdRes}).}

\begin{prop}{Identification}. Define $\Theta_{I}$ as the identified set without using the extensive margin (e.g. based on \eqref{eq:no_surv_mom}) and $\Theta_{I}'$ as the identified set when \eqref{eq:surv_mom} is utilized instead of \eqref{eq:no_surv_mom}. Given Time Series observations on $\{Y_{t},B_{t}\}_{t=1..T}$:
	\begin{enumerate}
				\item $\Theta_{I}'\subset\Theta_{I}$ if $Var(B_{t})>0$.
		\item When $B_{t}\neq 0$, $\Theta_{I}$ is not a singleton (Impossibility of point identification). 
	\end{enumerate} \label{over_surv}
\end{prop}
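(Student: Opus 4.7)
The plan is to leverage the multiplicative decomposition $\mu_{t}=\kappa_{t}B_{t}$ together with Lemmas \ref{nr} and \ref{over} from Appendix A: Lemma \ref{nr} point-identifies $\theta$ when a moment equality holds with a fixed wedge $U$, and Lemma \ref{over} characterizes how adding a non-redundant inequality refines the identified set. The three parts of the proposition then correspond to three qualitative regimes of the random variable $B_{t}$.

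For part (1), the containment $\Theta_{I}'\subseteq\Theta_{I}$ is immediate because $\Theta_{I}'$ is defined by the conjunction of the original condition \eqref{eq:no_surv_mom} and the survey-augmented condition \eqref{eq:surv_mom}, so every $\theta\in\Theta_{I}'$ satisfies \eqref{eq:no_surv_mom} a fortiori. To obtain strict inclusion I would write $g(Y_{t},Y_{t+1},\theta)-\lambda^{o}_{t}(\cdot)=\mu_{t}+\epsilon_{t}(\theta)$ with $\epsilon_{t}(\theta_{0})=0$, so that the original constraint reads $\mathbb{E}[\epsilon_{t}(\theta)X_{t-1}]\geq -\mathbb{E}[\mu_{t}X_{t-1}]$ while the augmented one reads $\mathbb{E}[\epsilon_{t}(\theta)X_{t-1}B^{-1}_{t}]\geq -\mathbb{E}[\kappa_{t}X_{t-1}]$. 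When $B_{t}\in(0,1)$ is non-degenerate, $B^{-1}_{t}$ is a non-constant positive variable taking values in $(1,\infty)$, so the two linear functionals of $\theta$ are linearly independent; Lemma \ref{over} then produces parameter values that satisfy the first inequality but violate the second, giving $\Theta_{I}'\subsetneq\Theta_{I}$.

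For part (2), if $Var(B_{t})=0$ then $B_{t}=B$ almost surely for some constant $B$. When $B>0$, equation \eqref{eq:surv_mom} reduces to $\mathbb{E}[(g-\lambda^{o})X_{t-1}]=B\kappa$ with $\kappa\in\mathbb{R}_{+}\cup\{\infty\}$; since $B\kappa$ and $\mu$ sweep the same interval $[0,\infty]$, the two conditions impose identical restrictions on $\theta$. The edge case $B=0$ is handled by the author's remark that \eqref{eq:surv_mom} is then trivially satisfied for every $\theta$ and \eqref{eq:no_surv_mom} collapses to a moment equality with $\mu=0$. In either sub-case $\Theta_{I}'=\Theta_{I}$.

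For part (3), when $B_{t}\neq 0$ on a set of positive probability, $\mu_{t}=\kappa_{t}B_{t}$ is strictly positive there, so the aggregate wedge $\mu=\mathbb{E}[\mu_{t}X_{t-1}]$ is strictly positive while \eqref{eq:no_surv_mom} only restricts $\mu\geq 0$. Applying Lemma \ref{nr}, each admissible value $U$ of the wedge pins down a distinct $\theta(U)$; as $U$ varies over a non-trivial sub-interval of $(0,\infty)^{r}$, the corresponding $\theta(U)$ traces out a continuum of parameter values all compatible with \eqref{eq:no_surv_mom}, so $\Theta_{I}$ cannot be a singleton. The main obstacle is establishing the linear independence needed for Lemma \ref{over} in part (1): strict refinement fails if the $B^{-1}_{t}$-reweighted moment is, in population, a non-negative linear combination of the original moments, a degenerate situation ruled out precisely when $B^{-1}_{t}$ exhibits genuine probabilistic variation beyond what is spanned by the original instruments $\phi(\mathbf{Y}^{\tau-1})$.
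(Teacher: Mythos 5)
Your proposal is correct and follows essentially the same route as the paper: it rests on the decomposition $\mu_{t}=\kappa_{t}B_{t}$ together with Lemmas \ref{nr} and \ref{over}, using non-redundancy of the $B_{t}^{-1}$-reweighted moment when $B_{t}$ varies in $(0,1)$ for part (1), degeneracy of $B_{t}$ for part (2), and strict positivity of the wedge for part (3). The only differences are cosmetic: you prove part (2) by a direct rescaling argument (the paper notes $Corr(\kappa_{t},B_{t}\kappa_{t})=1$) and part (3) directly via the injective map $U\mapsto\theta(U)$ rather than by the paper's contrapositive, and you usefully make explicit the non-degeneracy of $B_{t}$ that part (1) implicitly requires.
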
 
\begin{proof}
	See section \ref{over_surv_proof} in Appendix \hyperref[AppA]{A}
\end{proof} \vspace{-0.1 in}

The above results can be interpreted as follows.
Variation in $B_{t}$ is essential for shrinking the identified set, as it breaks the perfect correlation between $m_{1}$ and $m_{2}$. 
Intuitively, $B_{t}$ adds information as it describes the mass of financially constrained agents over time, which determine part of the latent $\mu_{t}$.\footnote{Viewed from the perspective of an individual, $B_{t}$ describes the probability that an agent chosen at random is constrained or not. The analytical example in Appendix \hyperref[AppB]{B} uses observations at the individual level and shows explicitly how the extensive margin provides additional identifying information for the risk aversion parameter. } Viewed through the lens of the log-linearized condition \eqref{eq:intui}, incorporating information on $B_{t}$ yields that
\begin{eqnarray} \gamma&=&\frac{Cov(\tilde{R}^{g}_{t+1},\tilde{R}^{g}_{t}) + Cov(\Delta\tilde{\Xi}_{t}(\gamma),\tilde{R}^{g}_{t})+\frac{1-\beta \bar{R}^{g}}{\beta \bar{R}^{g}}\left(Cov(\tilde{B}_{t},\tilde{R}^{g}_{t})+Cov(\tilde{\kappa}_{t},\tilde{R}^{g}_{t})\right)}{Cov(\Delta\tilde{C}_{t+1},\tilde{R}^{g}_{t})} \label{eq:analytical}
\end{eqnarray}
\normalsize
As $\tilde{B}_{t}$ is now observed, the covariance between the real gross return  and the extensive margin $(Cov(\tilde{B}_{t},\tilde{R}^{g}_{t}))$ is pinned down by the data. Hence, risk aversion $(\gamma)$ will only vary in the admissible range of  $Cov(\tilde{\kappa}_{t},\tilde{R}^{g}_{t})$, the corresponding covariance with the intensive margin. Moreover, absence of time variation in the proportion of agents that are constrained implies that $B_{t}$ has no additional information for the parameters of the model as  $Cov(\tilde{\kappa}_{t},\tilde{R}^{g}_{t})=Cov(\tilde{\mu}_{t},\tilde{R}^{g}_{t})$. 

Regarding the second result of Proposition 1, when nobody is constrained at all times,  $\gamma$ will be point identified as: 
\begin{eqnarray} \gamma&=&\frac{Cov(\tilde{R}^{g}_{t+1},\tilde{R}^{g}_{t})+ Cov(\Delta\tilde{\Xi}_{t}(\gamma),\tilde{R}^{g}_{t})}{Cov(\Delta\tilde{C}_{t+1},\tilde{R}^{g}_{t})}
\end{eqnarray} Otherwise, if there is a positive mass of agents that are constrained at some $t$, then risk aversion will be set identified as the non-linear Euler equation turns into an inequality.

An additional insight  from \eqref{eq:analytical} is that the extent to which the presence of liquidity constraints matters for the identification of risk aversion depends on how close $\frac{1}{\beta}$ is to the real interest rate. In the frictionless steady state, the gross interest rate is  pinned down by $\frac{1}{\beta}$. When markets are incomplete and/or agents are subject to financial constraints, precautionary motives pull the equilibrium interest rate below the discount rate. This makes estimates of risk aversion more sensitive to the lack of observing $B_{t}$ and $\kappa_{t}$. Symmetrically, observing either of them will matter most when the extent of precautionary saving is substantial.

\subsection{Robustness of Moment Inequality} 
A concern that might arise is that the Euler equation distortion might be due to agents being constrained for a variety of reasons, which are inevitably not all captured by the available measure for $B_{t}$. It is therefore re-assuring to show that the empirical approach is still valid. The most instructive way to explain this is by viewing $B_{t}$, as a latent variable which fluctuates due to consumer choices being distorted for different reasons. 

More particularly, if $\left\{A^{i}\right\}$ signifies the set of disjoint events causing agents to behave with a distorted Euler equation, and ${B}^{o}_{t}$ is the proportion of agents observed for a subset of events in $\left\{A^{i}\right\}$, then  
\[B_{t}\equiv \mathbb{P}(\lambda_{i,t}> 0) = \sum_{i} \mathbb{P}(\left\{\lambda_{i,t}> 0\right\}\cap \mathcal{A}^{i})\geq {B}^{o}_{t}\] 
We can therefore see that (i) any information on this latent variable is helpful as it partially captures variations in $B_{t}$ and (ii) fully observing $B_{t}$ is not necessary for the approach to be successful as even if ${B}^{o}_{t}\leq B_{t}$, the moment inequality restrictions are still valid:
\begin{eqnarray*}
	{q}_{t}({B}^{o}_{t})^{-1}&\geq&{q}_{t}({B}^{}_{t})^{-1}\\
	\mathbb{E}({q}_{t}({B}^{o}_{t})^{-1}X_{t})&\geq& \mathbb{E}({q}_{t}({B}^{}_{t})^{-1}X_{t})\geq 0     
\end{eqnarray*}    
where ${q}_{t}$ is the moment condition based on the model without active constraints e.g. \[{q}_{t}:=1-\beta \mathbb{E}_{t}\left(\frac{C_{t+1}}{C_{t}}\right)^{-\gamma}\left(\frac{\Xi_{t+1}}{\Xi_{t}}\right)R_{t+1}\] In the ideal case in which we can identify all events, then there will be no variation in $B_{t}$ that is unaccounted for, and therefore identified sets are likely to be as sharp as possible. 

Moreover, notice that as in the more standard moment equality case, the approach is robust to any unobserved shock that distorts the conditional Euler equation but is orthogonal to aggregate past information. If there is a shock  that distorts the Euler equation but is not additive, then it can be easily accommodated by simulating from a postulated distribution of the shock, and then averaging across the draws.
 Having said that, distorting factors that are unobserved and cannot be controlled for can introduce additional wedges that might render the inequality invalid. An analogous qualifying statement should therefore be made as in the case of moment equalities, with the exception that moment inequalities are not valid only if the direction of these wedges is of the opposite sign, in total. The set of unobserved factors for which this holds is strictly a subset of the set for which the moment equality does not hold.  

\subsection{Illustration with Simulated Data} 
This sub-section provides supportive simulation evidence on using moment inequalities to identify risk aversion and the discount factor in a controlled environment, where the data generating process is the \citet{krusellsmith} (KS hereafter) economy, the workhorse heterogeneous agent model with aggregate risk. 
Adopting the formulation in \citet{MALIAR201042, denhaanjudd}, the KS economy  features a standard capital accumulation problem under incomplete markets, where the household first order conditions are as follows: 
\begin{eqnarray*}
	c^{-\gamma}_{i,t}&=& \beta\mathbb{E}_{t}(1-\delta + r^{k}_{t+1})c^{-\gamma}_{i,t+1}+\lambda_{i,t}\\
	k_{i,t+1} &=& (1-\tau)w_{t}\bar{l}\epsilon_{i} + \nu_{t} w_{t} (1-\epsilon_{i})+(1-\delta+r^{k}_{t})k_{i,t}-c_{i,t}\\
	k_{i,t+1}&\geq& 0 
\end{eqnarray*} 
where $\epsilon_{i}\in\{0,1\}$ is the idiosyncratic employment shock, $\bar{l}$ the time endowment and $v_{t}$ the fraction of the wage $(w_{t})$ that is received as unemployment benefit. 

Perfectly competitive markets imply that the interest and wage rates are equal to the marginal products of capital and labor: \[r^{k}_{t}=Z_{t}\alpha\left(\frac{K_{t}}{\bar{l}L_{t}}\right)^{\alpha-1},  \quad w_{t}=Z_{t}(1-\alpha)\left(\frac{K_{t}}{\bar{l}L_{t}}\right)^{\alpha}\]
while the labor income tax rate $(\tau)$ is designed to be equal to $\tau = \frac{\nu_{t} u_{t}}{\bar{l}L_{t}}$. 
The only source of aggregate risk is the aggregate productivity shock $Z_{t}$ which takes two values, $\left\{Z_{low},Z_{high}\right\}$. In turn, the unemployment rate $u_{t}$ depends on the productivity shock as well and takes two possible values, $\left\{u_{low},u_{high}\right\}$, with transition matrix $P=(p_{ll},p_{lh};p_{hl},p_{hh})$. Finally, full employment is normalized to one, thus $L = 1-u$.

The aggregated Euler equation is similar to \eqref{eq:eulerbond} but now the real interest rate is pinned down by the aggregate capital stock. The following moment inequalities characterize the restrictions without and with the extensive margin respectively:
\begin{eqnarray}
	\mathbb{E}\left(1-\beta \left(\frac{C_{t+1}}{C_{t}}\right)^{^{-\gamma}}\left(\frac{\Xi_{t+1}}{\Xi_{t}}\right)(1-\delta+ r^{k}_{t+1})\right) X_{t}&\geq& 0\\ 
	\mathbb{E}\left(1-\beta \left(\frac{C_{t+1}}{C_{t}}\right)^{^{-\gamma}}\left(\frac{\Xi_{t+1}}{\Xi_{t}}\right)(1-\delta+ r^{k}_{t+1})\right) B^{-1}_{t}X_{t}&\geq& 0   \label{simul_exp1}
\end{eqnarray}
 where instrument $X_{t}$ is constructed using standardized values of aggregate consumption lagged by one and two periods.   I adopt the same parameterization and solution method as in \citet{MALIAR201042, denhaanjudd} to generate the data by simulating 1400 periods and keeping the last 200 as the pseudo-sample. The mass of the borrowing constrained is measured as the mass on the leftmost four grid points (out of 10000) for $k_{i,t}$.
 
 The parameter values are set as follows: $\beta=0.97,\gamma=1.5,\alpha=0.36,\delta=0.025,Z_{high}=1.01,Z_{low}=0.99,\nu = 0.15,\bar{l}=\frac{1}{0.9}, u_{h}=0.04,u_{l} = 0.1$. Fixing $\delta$, I employ the above moment inequalities to estimate $(\beta,\gamma)$.

Reparameterizing the inequalities to obtain equalities, (as in \eqref{mom_comp_gmm}) the parameter estimates for $(\theta,U)$ are defined by minimizing the following Continuously Updated - GMM criterion:\vspace{-0.1 in}
\begin{eqnarray}
	Q_{T}(\theta,U)&=& \frac{1}{2}(\bar{m}(Y_{t},Y_{t+1},\theta)-U)'V^{-1}(\bar{m}(Y_{t},Y_{t+1},\theta)-U) \label{eq:cugmm}
\end{eqnarray}
where $\bar{m}(Y_{t},Y_{t+1},\theta)$ is the sample counterpart of the moment conditions above with $\theta=(\beta,\gamma)$, and $V^{-1}$ is the generalized inverse of the long run variance of the moment conditions estimated using the \citet{10.2307/1913610} estimator truncated at second order. Inference is based on quantiles of 700000 draws from the quasi-posterior distribution based on the above objective function using a single block Metropolis-Hastings Markov Chain Monte Carlo (MCMC) sampler. Similar to standard quasi-Bayesian algorithms, the quasi-posterior is proportional to $\pi(\theta)e^{-TQ_{T}(\theta,U)}$ where $\pi(\theta,U)$ is the prior distribution, see e.g. \citet{doi:10.3982/ECTA14525}. Moreover,  I do not use any prior information, apart from $U>0$ (element-wise). 

Figure \ref{MC} presents the $95\%$ confidence sets for $(\beta,\gamma)$, with and without using the extensive margin when $T=200$, a conventional sample size for  quarterly data, where the dashed lines indicate the true parameter values.

Utilizing the extensive margin shrinks the confidence sets.\footnote{As a technical aside, notice that within the context of set identification, sharper confidence sets do not imply sets that are centered around the true value, as any value within the identified set is equally likely given the identifying restrictions, in population.}  Despite that the proportion of constrained agents is not, quantitatively speaking,  a variable that influences the level of aggregate consumption, its covariation with the moment conditions is very informative for the preference parameters. 

Section \ref{MCev} in Appendix \hyperref[AppA]{A}  presents Monte Carlo evidence (500 simulations) on the reduction of the size of the confidence sets that cover the true value. Beyond the ideal setting in terms of data availability, I provide simulation evidence on the good performance of the methodology in the empirically relevant case in which the proportion of constrained agents is mismeasured, as well as the case in which the consumption cross section is available at a lower frequency than the macroeconomic aggregates.  
\begin{figure}[]
	\begin{center}			
		\includegraphics[scale=0.5]{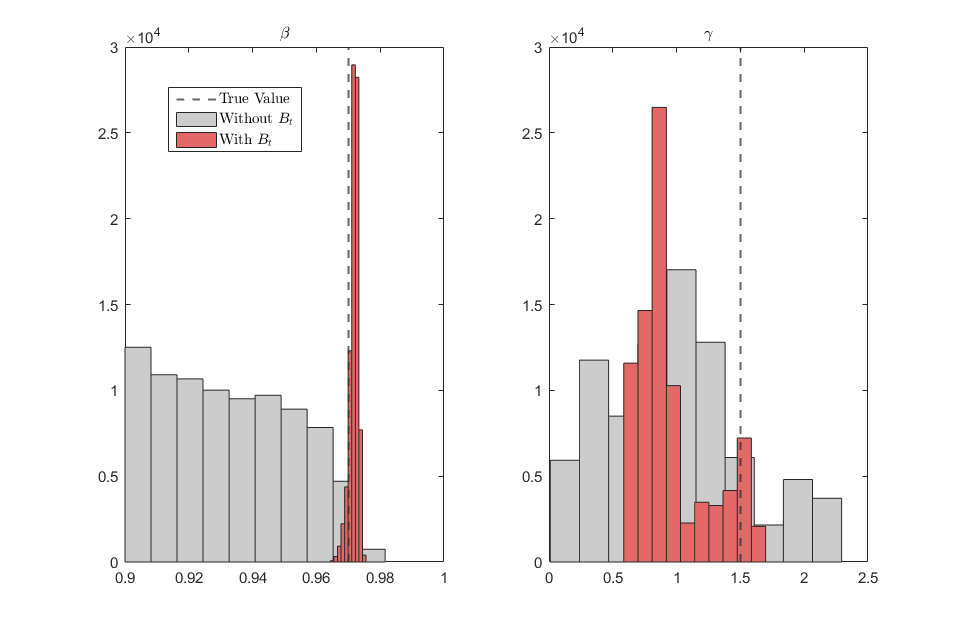}
		\par  	\label{MC}
	\end{center}  \vspace{-0.2 in}
	\protect\caption{Comparison of Confidence sets for the identified set for $T=200$  using (13) (grey) and (14) (red), respectively.}	
\end{figure} 
\subsection{A note on implementation}
The criterion function in \eqref{eq:cugmm} is a straightforward extension of the the continuously updated GMM criterion when the estimating equations are moment equalities ($U=0$). Given a conditional moment inequality, we can construct an unconditional moment inequality by multiplying the moment condition with a positively valued random variable which belongs to the information set of the agents, just as in the conventional GMM case. $\phi(\mathbf{Y}^{t})$ can therefore be any positively valued transformation of current and lagged values $Y_{t-j},j\geq 0$. In theory, interacting the distortion $\lambda^{un}_{t}$ with an instrument and taking the unconditional expectation leads to a positive scalar, the nuisance parameter $u$. For example, 
\begin{eqnarray}
	\mathbb{E}\left(1-\beta \left(\frac{C_{t+1}}{C_{t}}\right)^{^{-\gamma}}\left(\frac{\Xi_{t+1}}{\Xi_{t}}\right)(1-\delta+ r^{k}_{t+1})\right) X_{t}&\geq& 0
\end{eqnarray} is equivalent to 
\begin{eqnarray}
	\mathbb{E}\left(1-\beta \left(\frac{C_{t+1}}{C_{t}}\right)^{^{-\gamma}}\left(\frac{\Xi_{t+1}}{\Xi_{t}}\right)(1-\delta+ r^{k}_{t+1})\right) X_{t}-u&=& 0 
\end{eqnarray}
In terms of practice, what this implies is that $(\theta,u)$ have to be jointly drawn using the same MCMC procedure one would use to obtain draws from the pseudodensity induced by the GMM criterion in the moment equality case (see \citet{Chernozhukov2003293} for the latter). The difference is in how inference is done in the case of partial identification, as the one we have in this setting. Inference is based on collecting the parameter draws that correspond to the $1-\alpha$ level quantiles of the criterion draws. Under point identification, this would be equivalent to performing inference by using directly the parameter draws. Under set identification, because there can be multiple parameter draws that correspond to the same value of the criterion function, the former do not necessarily converge, while draws from the pseudo-density do. 

More particularly, applying Procedure 1 in \citet{doi:10.3982/ECTA14525}, 
\begin{enumerate}
	\item Draw a sample $\{\beta^{b},\gamma^{b},u^{b}_{1},u^{b}_{2},...u^{b}_{n_{\mathfrak{m}}}\}_{b=1..B}$ from the quasi-posterior $\pi(\theta,U)e^{-TQ_{T}(\theta,U)}$.
	\item Calculate the $(1-\alpha)$ quantile of $L_{T}:=-Q_{T}(\theta,U)$, $\zeta_{B,\alpha}$. 
	\item The $100\alpha\%$ confidence set for the identified set are the parameter draws such that $L_{T}\geq \zeta_{B,\alpha}$.
\end{enumerate}

 \citet{doi:10.3982/ECTA14525} have shown that confidence sets constructed in this way have correct frequentist coverage when the optimal weighting matrix is utilized.
Moreover, they provide algorithms to perform subvector inference that is less conservative that the projection based approach that is utilized in this paper.

%


\section{Application: Estimating Preferences using Spanish Data} 
 Spain is a country in which financial frictions are a  priori expected to have played a significant role post 1999, and especially during the crisis period, 2008-2014. Unrestrained credit growth in the past and the subsequent bust has resulted in firm bankruptcies and a large drop in investment. Similarly, household indebtedness before the crisis was succeeded by de-leveraging.

\begin{figure}[H]
	\begin{centering}
		\includegraphics[scale=0.22]{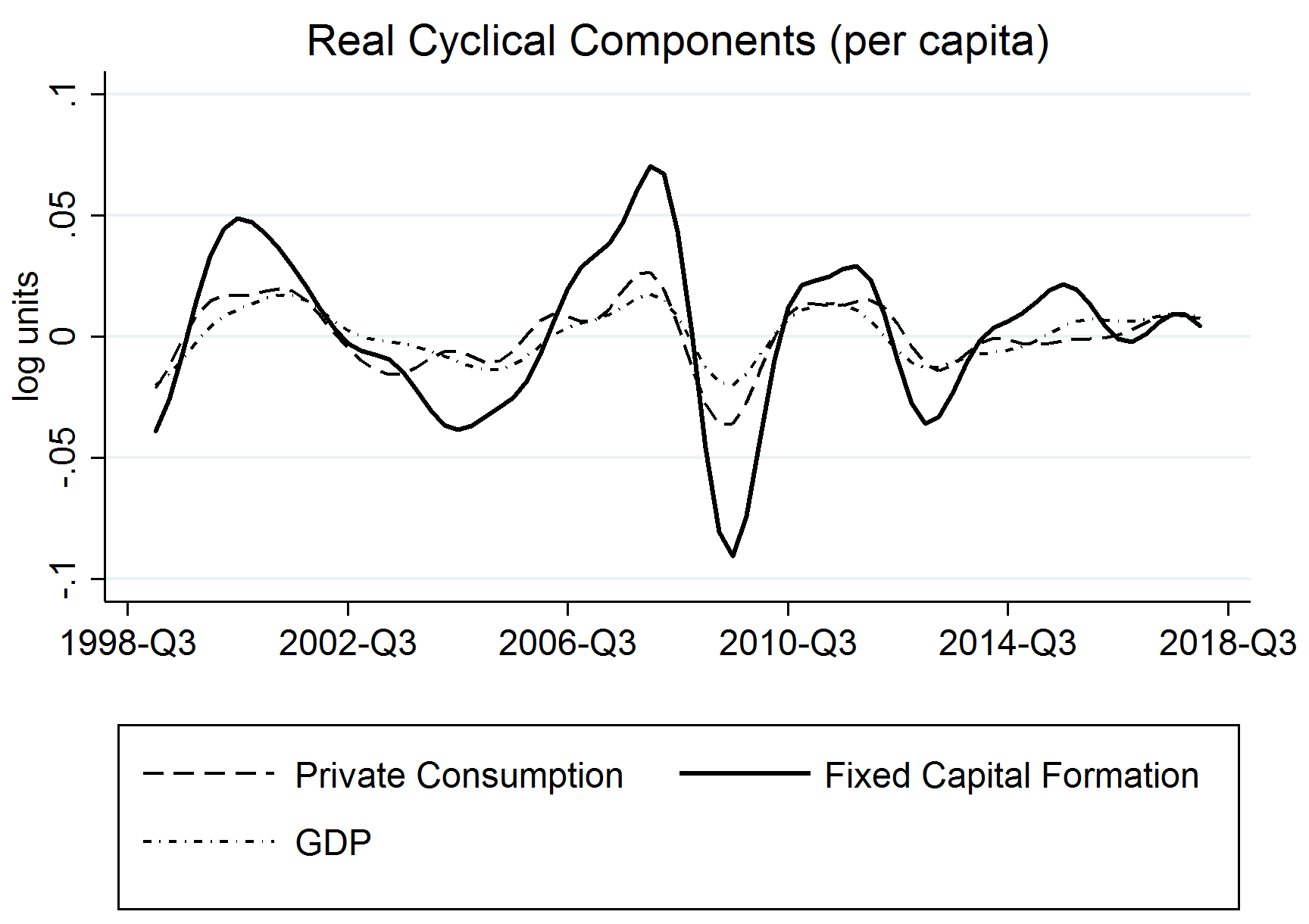}
		\par
	\end{centering} 
\caption{}
	\label{spain}
\end{figure} \vspace{-0.25 in}
Figure \ref{spain} plots the cyclical components for consumption, investment and output after 1999. 
As evident, there was a large slump in aggregate investment, followed by a relatively smaller slump in aggregate consumption.

 An advantage of the framework developed in this paper is that the identification of deep parameters is not tied to taking a stance on how severe financial frictions have been, nor the time period in which credit constraints have been binding. This allows us to use the entire time series in the post - Euro period, as well as survey data that are partially harmonized across countries of the European Area, making the approach portable to other countries as well.\footnote{Another interesting case to apply this methodology is the case of the US using e.g. data from the Survey of Consumer Finances (also available for every three years) and FRBNY microeconomic data centre (Survey of Consumer Expectations). Indeed the latter dataset is available at a high frequency, on par with macroeconomic variables, but this survey started much later than EU surveys (2013).}

\subsection*{Identifying Restrictions}
Without loss of generality with respect to the friction generating mechanism, I make parametric assumptions on the utility function.
Assuming CRRA preferences and separability between consumption and leisure implies the following utility function:
\[u(c_{i,t},l_{i,t},\gamma,\eta):= \frac{c_{i,t}^{1-\gamma }-1}{1-\gamma }- \frac{l_{i,t}^{1+\eta}}{1+\eta}   \] where $\gamma$ is the coefficient of relative risk aversion and $\eta$ the (inverse) Frisch elasticity of labor supply. 

The relevant first order conditions from \eqref{eq:genzeldes} are the Euler equation for some asset $j$, \small 
\begin{eqnarray*}
c_{i,t}^{-\gamma}&=&\beta\mathbb{E}_{t}c_{i,t+1}^{-\gamma}\left(1+r_{j}(a^{j}_{i,t+1},S_{t+1})+\frac{\partial r_{j}(a^{j}_{i,t+1},S_{t+1})}{\partial a^{j}_{i,t+1}}a^{j}_{i,t+1}\right)+ \lambda_{i,t}
\end{eqnarray*} \normalsize
and the intratemporal labor supply condition, 
\begin{eqnarray*}
	l^{\eta}_{i,t} &=& \frac{W_{t}}{P_{t}}c_{i,t}^{-\gamma}
\end{eqnarray*}\normalsize where I have used that $\frac{\partial y_{i,t}(l_{i,t})}{\partial l_{i,t}}=\frac{W_{t}}{P_{t}}$, that is, the derivative of total individual income with respect to hours worked is the real wage. 
The parameters of interest are therefore $(\gamma,\beta,\eta)$.\footnote{ Note that the assumption of separability between consumption and labor can be relaxed without any bearing on the main point of the paper. With non-separability, aggregation will involve identifying an additional moment from micro data, that of the cross sectional covariation of consumption and hours worked, while it will introduce an additional parameter to be estimated, that of the degree of separability. Hence, separability does not buy as much in terms of the order condition for identification.}

\normalsize

For this application, I will focus on assets whose individual returns depend only on aggregate risk e.g. government bonds or savings accounts whose benchmark returns depend on the policy rate.\footnote{This is also consistent with a wider asset class whose returns are characterized by $r(a^{j}_{i,t+1},S_{t+1})=r(S_{t+1}) + \frac{\partial r_{j}(a^{j}_{i,t+1})}{\partial a^{j}_{i,t+1}}a^{j}_{i,t+1}$ where the second idiosyncratic component is unobserved. This is true for e.g. a credit line which involves higher interest rates at higher levels of unsecured borrowing and can be thought of as a softer borrowing constraint which has similar properties to $\lambda_{i,t}$.}
Aggregating as in \eqref{eq:aggeul} yields\small
\begin{eqnarray}
	C_{t}^{-\gamma}\Xi_{t}&=&\beta\mathbb{E}_{t}C_{t+1}^{-\gamma}{\Xi_{t+1}}{}(\underset{R_{t+1}}{\underbrace{1+r(S_{t+1})}}) + {\mu_{t}} \label{eq:distR}
\end{eqnarray}
 \normalsize
Furthermore, the intratemporal condition is:
\[L_{t}=W_{t}^{\frac{1}{\eta}}C_{t}^{-\frac{\gamma}{\eta}}\Xi^{lab}_{t}\] where\[\Xi^{lab}_{t}\equiv\int \Xi^{\frac{1}{\eta}}_{i,t}p(s_{i,t}|S_{t})d s_{i,t} \]
From the above expressions, it is clear that there are two main channels through which heterogeneity matters for aggregate fluctuations to first order. The first is fluctuations in the dispersion of the consumption share, which distorts both the intertemporal allocation of aggregate consumption as well as the intratemporal consumption - labor tradeoff even if all agents are "on their Euler equations". The aggregation residual, which is a cross sectional moment of the consumption share is essentially a sufficient statistic for the underlying moments of the wealth distribution as consumption is a nonlinear function of wealth.   Notice also that unless $\eta=1$, that is the Frisch elasticity is equal to one, the way heterogeneity matters for hours is different than consumption (the aggregation residuals are different), which implies that the dynamics of hours will also be affected by heterogeneity.
The second channel is the additional aggregate distortion $\mu_{t}$, which is positive whenever there is a subset of agents who are constrained.
  
Therefore, the aggregated moment inequalities that are relevant for identification are the following, where the second moment inequality arises from incorporating the aggregate intratemporal condition in the Euler equation: 
\begin{eqnarray}
	\mathbb{E}\left(1-\beta \left(\frac{C_{t+1}}{C_{t}}\right)^{-\gamma}\left(\frac{\Xi_{t+1}}{\Xi_{t}}\right)R_{t+1}\right) \phi(\mathbf{Y}^{t})&\geq& 0\\
	\mathbb{E}\left(1-\beta\left(\frac{\Xi_{t+1}}{\Xi_{t}}\right)\left(\frac{\Xi^{lab}_{t}}{\Xi^{lab}_{t+1}}\right)^{\eta}\left(\frac{L_{t+1}}{L_{t}}\right)^{\eta}\frac{W_{t}}{W_{t+1}}R_{t+1}\right) \phi(\mathbf{Y}^{t})&\geq& 0 \nonumber
\end{eqnarray} 
\subsection{Implementation}
The above moment inequalities are expressed as moment equalities using the component-wise positive valued nuisance vectors $\mu$ and $\kappa$ respectively,	where $R\in\{R_{g},R_{ib}\}$, are the ex-post real 3-month Treasury bond rate and the Interbank rate:
\small
	\begin{eqnarray}
	\mathbb{E}\left[\left(1-\beta \left(\frac{C_{t+1}}{C_{t}}\right)^{-\gamma}\left(\frac{\Xi_{t+1}}{\Xi_{t}}\right)R_{t+1}\right) \phi(\mathbf{Y}^{t})\right]-\mu_{1}&=& 0 \label{eq:noB}\\
	\mathbb{E}\left[\left(1-\beta\left(\frac{\Xi_{t+1}}{\Xi_{t}}\right)\left(\frac{\Xi^{lab}_{t}}{\Xi^{lab}_{t+1}}\right)^{\eta}\left(\frac{L_{t+1}}{L_{t}}\right)^{\eta}\frac{W_{t}}{W_{t+1}}R_{t+1}\right) \phi(\mathbf{Y}^{t})\right]-\mu_{2}&=& 0 \nonumber
	\end{eqnarray}  
	\begin{eqnarray}
		\mathbb{E}\left[\left(1-\beta \left(\frac{C_{t+1}}{C_{t}}\right)^{-\gamma}\left(\frac{\Xi_{t+1}}{\Xi_{t}}\right)R_{t+1}\right)B^{-1}_{t}\phi(\mathbf{Y}^{t})\right]-\kappa_{1}&=& 0  \label{eq:withB} \\
		\mathbb{E}\left[\left(1-\beta\left(\frac{\Xi_{t+1}}{\Xi_{t}}\right)\left(\frac{\Xi^{lab}_{t}}{\Xi^{lab}_{t+1}}\right)^{\eta}\left(\frac{L_{t+1}}{L_{t}}\right)^{\eta}\frac{W_{t}}{W_{t+1}}R_{t+1}\right)B^{-1}_{t} \phi(\mathbf{Y}^{t})\right]-\kappa_{2}&=& 0 \nonumber
	\end{eqnarray}  
	\normalsize
	
 To construct instruments $\phi(\mathbf{Y}^{t})$, I use lagged values of $(\Delta \mathcal{Y}_{t},\Delta I_{t})$, $I_{t}$ being aggregate investment and $\mathcal{Y}_{t}$ the real gross domestic product, and $(R_{g,t},R_{ib,t})$.
 
 In order to measure $B_{t}$, I combine information from the Survey of Household Finances (SHF) and the Business and Consumer Survey (BCS) of the European Commission \citet{BCS}.\footnote{SHF data can be downloaded from \href{https://www.bde.es/bde/en/areas/estadis/estadisticas-por/encuestas-hogar/relacionados/Encuesta_Financi/}{here} and BCS data can be downloaded from \href{https://ec.europa.eu/info/business-economy-euro/indicators-statistics/economic-databases/business-and-consumer-surveys_en}{here}.} 
I use the SHF to compute the exact triennial measure for $B_{t}$. The questionnaire asks households whether they have been denied a loan application during this period.\footnote{The survey question does not differentiate between business and private use.}  This question is consistent with a variety of reasons for which these households have been rejected, depending on individual characteristics, employment, guarantees, changes in the institution's credit policy and excessive debt. Moreover, I account for consumers that have not applied for loans because they expect to be rejected (also coined as "discouraged borrowers" in \citet{10.2307/2077836} ) and consumers that were granted a fraction of the requested loan. In order to impute $B_{t}$ in the periods in between, I employ the SHF and BCS data in a mixed frequency state space model. Appendix \hyperref[AppA]{A} (Section \ref{measurements} ) contains a detailed analysis of this approach. Both the SHF and BCS data as well as the extracted time series are plotted in Figure \ref{mixedLCC}.
 
 Finally, another variable that needs to be measured from micro data is the consumption share, which is important to construct the aggregation residuals $\Xi_{t}$ and  $\Xi^{lab}_{t}$. The measure is constructed using the Spanish Survey of Household Finances (SHF).\footnote{Please consult Section \ref{cons_sh} in Appendix \hyperref[AppA]{A} for a detailed explanation of how individual consumption is constructed.} In order to impute the values of aggregation residuals when they are latent, I use shape preserving interpolation. This is not likely to be a detrimental assumption for this kind of exercise as consumption is smoother than income, and its distribution is a slowly moving variable see e.g. \citet{bover}.\footnote{In particular, see  Figure 13 in that paper.} The performance of the method when employing this kind of interpolation is also checked in the Monte Carlo exercises and can be found in Appendix \hyperref[AppA]{A} (Section \ref{MCev}).

 \subsubsection{Details on Inference}
As in the simulation exercise, the parameter estimates for $(\theta,U)$ are defined as the minimizers of the continuously updated GMM criterion in \eqref{eq:cugmm}  where $\bar{m}(Y_{t},Y_{t+1},\theta)$ contains the sample counterpart of the above moment conditions with $\theta=(\gamma, \eta,\beta)$. The algorithm for inference used here is the same as the one employed in the simulation exercise.
%
The prior upper and lower bounds are $\gamma\in [0.01,20]$, $\frac{1}{\eta}\in [\frac{1}{15},\frac{1}{0.01}]$,  $\beta\in [0.8,0.9999]$. 

Below I present the empirical results in the case in which \eqref{eq:noB} and \eqref{eq:withB} are used separately, jointly, and when adding information using the average intratemporal condition as an additional moment.\footnote{\small
	\begin{eqnarray*}
		\mathbb{E}	\left(\left(\frac{C_{t+1}}{C_{t}}\right)^{-\gamma}- \left(\frac{\Xi^{lab}_{t}}{\Xi^{lab}_{t+1}}\right)^{\eta}\left(\frac{L_{t+1}}{L_{t}}\right)^{\eta}\frac{W_{t}}{W_{t+1}}\right) \phi(\mathbf{Y}^{t})&=&0
\end{eqnarray*} }	\normalsize Moreover, I choose not to use equity returns for estimation for two equally important reasons. First, in light of the results of \citet{ASCARI2021129}, stock market returns are likely to give imprecise estimates due to weak identification, while risk free returns less so. Second, I wish to revisit the predicted equity premium in order to provide empirical validation to the paper's approach; therefore, equity returns should not be used to fit the model.

\subsubsection{Results}The results that follow provide empirical support to Proposition  \ref{over_surv}. 
Table \ref{Lim_info_est_1} presents the $95\%$ confidence sets using $R_{ib}$ and $R_{g}$ as returns. 
Note that the confidence sets do not necessarily overlap, which is attributed to the different weighting that moments receive in finite sample, as in all cases the parameters are (conditional on $U$) overidentified.\footnote{This is similar to the more standard behavior of the GMM estimator in finite samples in the case of point identification. With more moments than parameters, since not all moments can be simultaneously zero, the weighting matrix has an impact on the final estimates. In this paper's setting, when using information on $B_{t}$, the estimate of the long run variance will be different than in the case in which $B_{t}$ is not utilized. Hence, the location of the estimates can be affected by the difference in the weighting scheme. This is also true when combining the two conditions  and when adding the intratemporal optimality condition in the set of estimating moments.}  

Comparing the sum of squared lengths of the confidence sets for all parameters in the case in which the extensive margin is utilized to the case in which it is not, we find that there is  a reduction that ranges from $71.2\%$ to $99.7\%$.

\begin{table}[h] 
	\small
	\begin{centering}
		\caption{Confidence Set for Preference Parameters}\bigskip
		\label{Lim_info_est_1}
		\resizebox{\columnwidth}{!}{
			\begin{tabular}{c c c c c c c }
				\hline 
				Parameter& \multicolumn{2}{c}{without $B_{t}$}\vline &  
				\multicolumn{2}{c}{with $B_{t}$}\vline& & \\  
				\hline &$q_{2.5\%}$ & $q_{97.5\%}$& $q_{2.5\%}$ & $q_{97.5\%}$&Instruments & Return
				\\ \hline \hline			
							$\gamma$  &  3.328  & 7.744 &  1.476  & 1.644 &   & \\                               
				$\frac{1}{\eta}$&  0.140 & 0.285 & 0.254 & 0.271 & $\Delta \mathcal{Y}_{t-1},\Delta I_{t-1}$, $R_{ib,t}$, & {$R_{ib,t}, R_{g,t}$} \\
				$\beta$ & 0.969  & 0.999&  0.998  &   0.999  & &\\  \hline		
					\multicolumn{7}{c}{Combination}\\  \hline
				$\gamma$& 5.614	   &10.362& 3.610 &5.410 &  &  \\                               
				$\frac{1}{\eta}$   & 0.146 & 0.522 & 0.067 & 0.075&  $\Delta \mathcal{Y}_{t-1},\Delta I_{t-1}$, $R_{ib,t}$, & {$R_{ib,t}, R_{g,t}$}  \\
				$\beta$  & 0.966& 0.999 &  0.983 &   0.999  &    & \\\hline\hline  
					\multicolumn{7}{c}{Adding Intratemporal Condition}\\  \hline
					$\gamma$& 3.864	   &4.573& 1.845 &2.111 &  &  \\                               
				$\frac{1}{\eta}$   & 0.787 & 1.056& 0.095 & 0.098&  $\Delta \mathcal{Y}_{t-1},\Delta I_{t-1}$, $R_{ib,t}$, & {$R_{ib,t}, R_{g,t}$}  \\
				$\beta$  & 0.992& 0.999 &  0.996 &   0.999  &    & \\\hline\hline  
		\end{tabular}}	
		\par\end{centering} 
\end{table}
\normalsize

In all cases, had we neglected the extensive margin of constrained consumers, higher values of risk aversion would be admissible. This is also consistent with the results of Section 3 and the analytical results in Appendix \hyperref[AppB]{B}. The time varying proportion of constrained agents provides additional information. 
Looking at the smallest interval compatible with the different estimates, risk aversion ranges from 1.48 to 1.64, which is close to estimates obtained in the microeconometric literature and much smaller than the typical estimates obtained using frictionless representative agent models.  
Allowing for positive distortions in the Euler equation for bonds implies that smooth consumption growth is consistent with the more volatile returns without requiring implausibly large estimates of risk aversion. This also implies that the  elasticity of intertemporal substitution is not too low, addressing the risk free rate puzzle as well, see e.g \cite{WEIL1989401}.

Turning attention to the Frisch elasticity of labor supply, the confidence sets for the Frisch elasticity are also refined and much more sharp after accounting for the extensive margin, providing further support to Proposition  \ref{over_surv}. The estimated elasticities are much less than 1 in general, close to the values supported by microeconomic estimates. In fact, as \citet{10.1257/aer.101.3.471} suggest,  models that require an elasticity above 1 are inconsistent with micro evidence. While in the first two sets of results estimates were already much less than 1 even before accounting for the extensive margin, this is not true in the third case that employs the intratemporal condition itself as an additional identifying restriction. Employing the extensive margin substantially lowers the estimated elasticity.  

As we did with the consumption Euler equation in \eqref{eq:intui}, we can correspondingly build intuition for the identification of the Frisch elasticity by inspecting the log-linear intratemporal condition.
Denoting the marginal utility of consumption by $MU_{t}:=C_{t}^{-\gamma}$, the Frisch elasticity is potentially  identifiable using the elasticity ($\epsilon$) of different macroeconomic aggregates with respect to  the wage rate i.e.  hours ($L$), the aggregation residual ($\Xi^{lab}$), the undistorted marginal utility of wealth ($MU-\mu$) and the extensive and intensive margins of the distortion, $B$ and $\kappa$ respectively:
\begin{eqnarray*} \frac{1}{\eta}=\frac{\frac{\partial ln L_{t}}{\partial ln W^{}_{t}}-\frac{\partial ln \Xi^{lab}_{t}(\eta)}{\partial ln W^{}_{t}}}{1+\frac{\partial ln MU_{t}}{\partial ln W^{}_{t}}}&=&\frac{\epsilon_{L,W}-\epsilon_{\Xi^{lab}(\eta,\gamma),W^{}}}{1+\epsilon_{MU-\mu,W^{}} +(\epsilon_{B,W^{}}+\epsilon_{\kappa,W^{}})\frac{\mu}{MU-\mu}}
\end{eqnarray*}
This is an implicit equation in $\eta$, which  identifies $\eta$ by adjusting  the raw elasticity $\epsilon_{L,W}$ for variation that is due to wealth effects in order to isolate the substitution effect. The admissible values of $\eta$ will depend non-linearly in all of these elasticities and on the admissible values of $\gamma$. The more responsive are dispersion $(\Xi^{lab})$ and the two margins of the distortion $(B,\kappa)$ to changes in the wage rate, the greater the divergence between the wage elasticity of aggregate hours from $\frac{\epsilon_{L,W}}{1+\epsilon_{MU,W}}$, which is the Frisch elasticity in the complete markets case. Being able to measure $B$ implies that the admissible estimates of $\frac{1}{\eta}$ will only vary with the admissible domain of the wage elasticity of the intensive margin of adjustment, $\epsilon_{\kappa,W}$. 
 
Consistent with the results of \citet{DOMEIJ2006242}, the adjustment to individual behavior in the presence of borrowing constraints can imply a higher elasticity of labor supply as $\epsilon_{\kappa,W}$ is expected to be negative. Moreover, the extensive margin $B_{t}$ affects aggregate labor supply in the same way. This is also in line with the fact that the lower bound on the Frisch elasticity is refined upwards in  the first set of results when $B_{t}$ was employed. 
Nevertheless, this is not true in the other cases. 
When macro aggregates are utilized for identification, higher estimates of the Frisch elasticity would be the only possible refinement when $\epsilon_{\Xi^{lab}(\eta,\gamma),W^{t}}$ is negligible. When the latter is not negligible, then identification of $\eta$ depends in a more complicated way on the rest of the parameters e.g. risk aversion and the aggregation residual $\Xi_{t}$.

 Finally, regarding the discount factor,  there are no major information gains, and estimates range from 0.983 to 0.999. Consistent with the simulation experiment, accounting for the extensive margin refines the estimates of $\beta$ upwards.  


\subsection{Does improved identification matter? }
Having shown that accounting for distortions in the Euler equation does matter for preference identification, I next turn to analyzing in more detail how this is likely to affect inference on asset prices. Since we have not utilized equity returns to estimate preferences, it is instructive to see whether the estimated preferences can also explain the observed equity premium in the Spanish Stock Market. 
Using the IBEX35 Total returns index, which includes the 35 most liquid stocks in the Madrid Stock Exchange and is available from 2001Q1 onwards, I compute the mean risk premium over government bonds in the period 2001Q1-2018Q and the corresponding prediction of the risk premium as implied by the model, accounting for the two distortions as follows:
\begin{eqnarray}{ln\left(\frac{\mathbb{E}R^{eq}}{\mathbb{E}R^{g}}\right)}\approx - \frac{Cov(\mathcal{M}(\theta),R^{eq})+\mu^{eq}(\theta)-\mu^{g}(\theta)}{1-\mu^{g}(\theta)}  \label{riskp}
\end{eqnarray}
where $\mathcal{M}(\theta):=\beta \left(\frac{C_{t+1}}{C_{t}}\right)^{-\gamma}\left(\frac{\Xi_{t+1}}{\Xi_{t}}\right)$ is the heterogeneity adjusted stochastic discount factor, evaluated at all $\theta\in\Theta_{95\%}$, $R^{eq}$ is the real equity return index and $(\mu^{eq},\mu^{g})$ are the average distortions in the Equity and Bond Euler equations respectively, also evaluated at $\theta\in\Theta_{95\%}$.\footnote{Notice that these distortions can be easily computed by plugging in $\theta$ and computing the sample mean of the corresponding Euler Equations e.g. \[\hat{\mu}^{g}(\theta)=\frac{1}{T}\sum_{t=2..T}\left( C_{t}^{-\gamma}-\beta C_{t+1}^{-\gamma}\frac{\Xi_{t+1}}{\Xi_{t}}R_{g,t+1}\right)\]}

In Figure \ref{eqp} we observe that the data based prediction for the average risk premium lies very close to the set valued predictions based on the different estimates. Moreover, shutting down frictions and heterogeneity delivers a counterfactual prediction for a weakly negative risk premium. What this implies is that had we not accounted for incomplete markets and the possibility of trading constraints across assets, in order to match the observed premium, parameter estimates would have to be severely distorted. 

This consists of indirect evidence on the relevance of incomplete markets and financial constraints for the equity premium puzzle \citep{10.2307/1832056,MEHRA1985145,10.2307/1815722} as these frictions are both qualitatively and quantitatively important.\footnote{Several mechanisms have been considered as potential resolutions to both the risk premium and the risk free puzzle e.g. uninsurable income risk and borrowing constraints, limited asset market participation and transaction costs  (\citet{10.1162/003355302753399508,10.1086/340776,10.2307/2138860, AIYAGARI1991311}), as well as non-time separable preferences \citep{doi:10.1111/j.1540-6261.2004.00670.x, RePEc:eee:moneco:v:24:y:1989:i:3:p:401-421}, lower tail risk \citet{NBERw11310}, misperceptions in returns \citep{10.1257/000282803321947407} and consumption habits \citep{10.2307/2937698}. Yet, results are sensitive to the particular model setups, parameterizations and wider implications (see e.g. \citet{4137daa05d1449ddbf4171960bde2680,10.1257/jep.23.1.193} for a review.).} 
The moment inequality approach allows for the possibility that the Euler equation does not hold with equality, with distortions having a certain direction in terms of sign. This enables the model to accommodate for such distortions without forcing  preference parameters to take implausible values. In addition, since preferences are (set) identified, then we can readily compute these distortions (i.e. $\mu^{eq}$ and $\mu^{g}$) and hence properly correct for their presence in the computation of the risk premium in \eqref{riskp}. 

\begin{figure}[h]
	\begin{centering}			
		\includegraphics[scale=0.35]{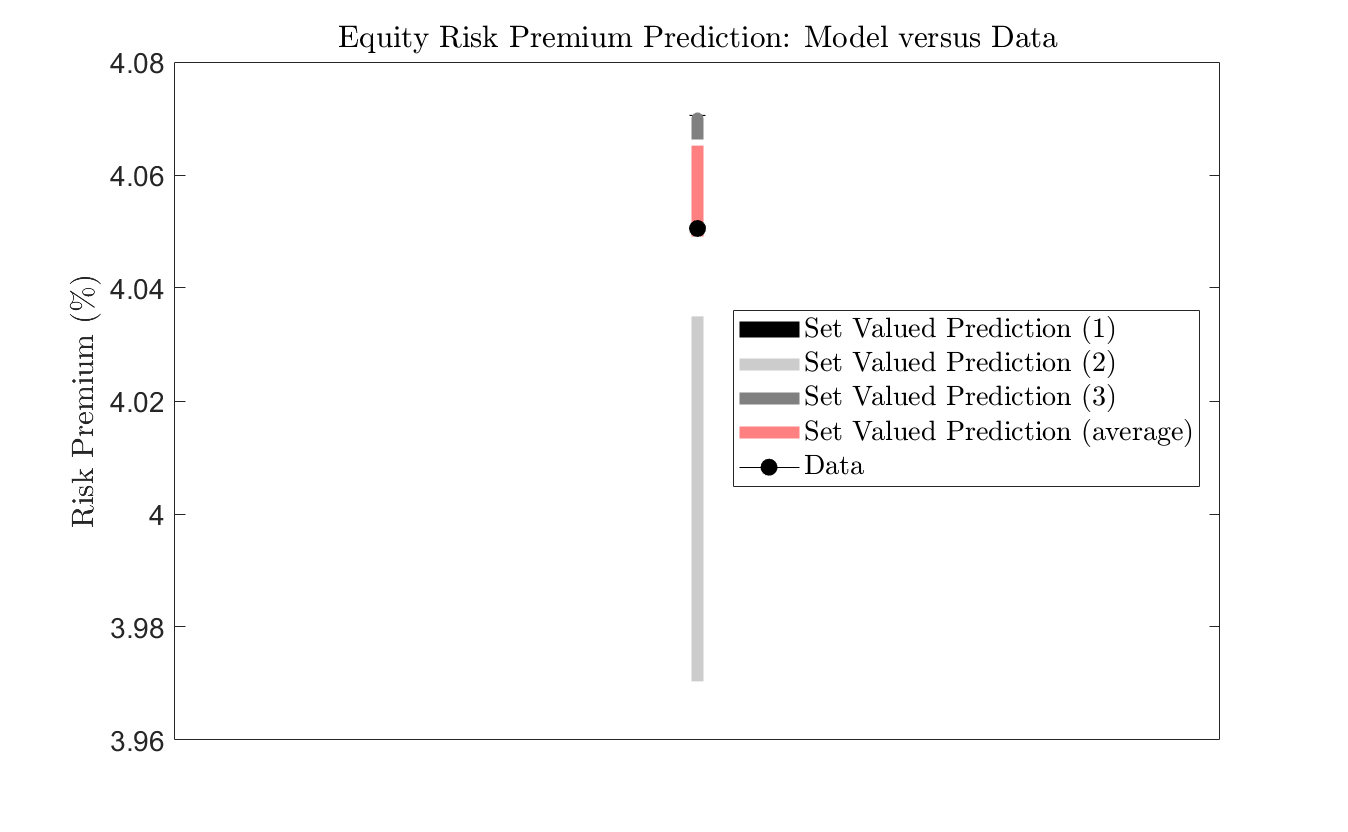}
		\includegraphics[scale=0.52]{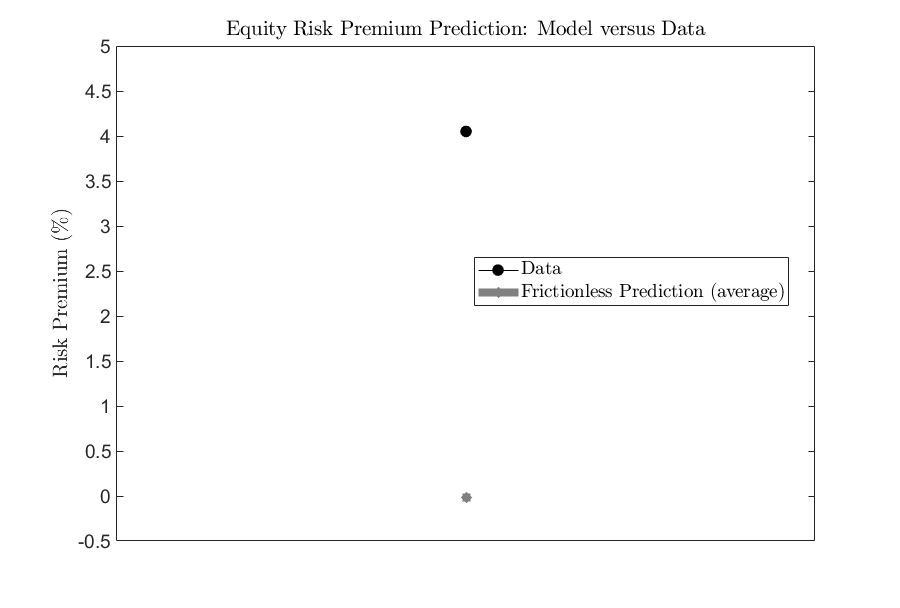}
		\par
	\end{centering}  \vspace{-0.2 in}	\protect\caption{The top figure compares the predicted equity risk premium based on the covariance between the heterogeneity consistent stochastic discount factor and the IBEX35 total returns index, adjusted by the Euler equation distortions. The different estimates correspond to the three cases in Table \ref{Lim_info_est_1} and their average. The lower figure computes the counterfactual prediction for $\mu^{eq}=\mu^{g}=0$ and $\Xi_{t}=1,\forall t$.}
	\label{eqp}  
\end{figure}  

Furthermore, we can utilize the estimates of the distortions to Euler equations to judge which trading restrictions are more severe. This exercise, which is similar in spirit to business cycle accounting (\citet{ECTA:ECTA768}), can shed light on the size of the distortions that fully specified structural models should be able to replicate for each asset class.\footnote{Correspondingly, these estimates may also be used as target statistics to validate structural models. In \citet{tryphonides2017set}, I develop a formal test.}
In contrast to typical business cycle accounting exercise where parameters are externally calibrated, the moment inequality framework has allowed us to estimate the structural parameters first while being consistent with the class of frictions under consideration. Thus, the estimated wedges will reflect both the in-sample information, as well as the model uncertainty that is already encoded by the identified set.

Figure \ref{bca} plots the estimated distortions to the two Euler equations used in estimation, as well as the Equity Euler equation, where estimates are expressed as a share of marginal utility.
\begin{figure}[]
	\begin{centering}			
		\includegraphics[scale=0.52]{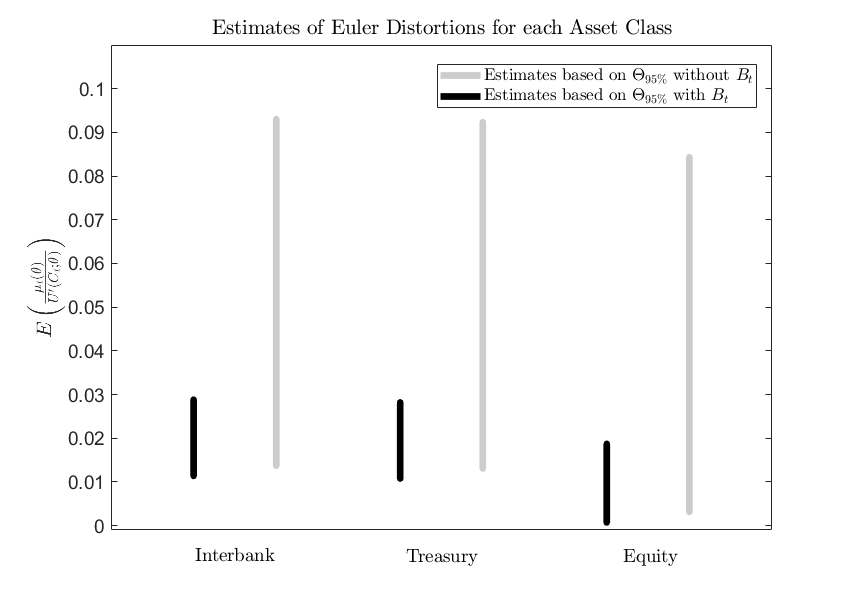}
		\par
	\end{centering}  	\protect\caption{Black lines are the distortions $(\mu^{ib},\mu^{g},\mu^{eq})$ estimated using $\Theta_{95\%}$ from Table 1, Case 1 (with $B_{t}$) while gray lines are the estimates using  $\Theta_{95\%}$ from Table 1, Case 1 (without $B_{t}$). }
	\label{bca}  
\end{figure} 
 Distortions in the bond market range from $1.1\%$ to $2.7\%$ of marginal utility, where distortions in equity trading  range from $0.1\%$ to $2.0\%$. 
In order to provide further interpretation to these distortions, the bond and equity pricing equations may be rewritten as follows: \small
\begin{eqnarray}
	\mathbb{E}_{t}\left(\mathcal{M}_{t+1}R^{g}_{t+1}-1\right)&=& - \frac{\mu^{g}_{t}}{U'(C_{t};\gamma,\eta)}\\
	\mathbb{E}_{t}\left(\mathcal{M}_{t+1}\left(R^{eq}_{t+1}-R^{g}_{t+1}\right)\right)&=&  \frac{\mu^{g}_{t}-\mu^{eq}_{t}}{U'(C_{t};\gamma,\eta)}
\end{eqnarray} \normalsize
The first expression shows the negative distortion to the risk free rate return due to borrowing constraints, that is, there is "excessive demand" for bonds which puts downward pressure on the return. This is consistent with the existing literature that seeks to rationalize the risk free rate puzzle through incomplete markets. The second distortion is the weighted risk premium, where the weight is the stochastic discount factor. The average distortion $\mu^{eq}$ could be either positive or negative. This can be rationalized by the fact that with measured total returns, which include dividend payments but no transaction costs, the distortion to the equity Euler equation will inevitably soak up the aggregate unobserved transaction costs. 

Utilizing a generalized version of \eqref{eq:genzeldes} in the case of equities $(a^{eq})$ that involve transaction costs $\psi(a^{eq}_{t+1},a^{eq}_{t})$, assuming that they are constant across agents, yields an aggregate distortion to the Euler equation as follows:\footnote{Note that $\psi_{1}$ and $\psi_{2}$ are the marginal transaction costs with respect to the first and second argument. See Appendix \hyperref[AppA]{A}  for the derivation of the generalized version of \eqref{eq:genzeldes}.}\small
\begin{eqnarray}
	\frac{\mu^{eq}_{t}}{U'(C_{t};\gamma,\eta)}&\equiv&\frac{\mu_{t}}{U'(C_{t};\gamma,\eta)}+ \mathbb{E}_{t}\mathcal{M}_{t+1}\left(  R^{eq}_{t+1} \left(  \frac{1}{1+\psi_{1,t+1}}-1\right)-\frac{\psi_{2,t+1}}{1+\psi_{1,t+1}} \right)  
\end{eqnarray}   
\normalsize Depending on the signs and magnitudes of the unobserved marginal transaction costs, the contribution to $\frac{\mu^{eq}_{t}}{U'(C_{t};\gamma,\eta)}$ can be negative, while the trading constraint e.g. the restriction in short selling has a positive contribution ($\mu_{t}$). The overall effect is therefore ambiguous. 
A negative value for  $\frac{\mu^{eq}_{t}}{U'(C_{t};\gamma,\eta)}$ provides indirect evidence for stronger net transaction costs, and it positively contributes to the risk premium, while a positive value is evidence of stronger trading constraints.\footnote{As an example of transaction costs, consider the case in which $\psi$ is symmetric e.g. quadratic in $a^{eq}_{i,t+1}-a^{eq}_{i,t}$. This implies that ${\psi_{2,t+1}}=-\psi_{1,t+1}$ while $\psi_{1,t+1}$ can be large enough such that $ R^{eq}_{t+1} \left(  \frac{1}{1+\psi_{1,t+1}}-1\right)-\frac{\psi_{2,t+1}}{1+\psi_{1,t+1}}$ becomes negative, making shares a costly asset to hold and thus increasing the required premium. Conversely, trading constraints such as short selling restrictions increase $\mu^{eq}_{t}$, exerting downward pressure on the premium. See \citet{AIYAGARI1991311,10.2307/2138860} for quantitative results on the effect of transaction costs on the equity premium as well as  \citet{LAGOS2010913} for a liquidity interpretation of the puzzle.}
The identified distortion is positive, hence providing evidence of stronger trading constraints.

Moreover, it is interesting to see whether the results change when we employ preference parameter estimates that do not incorporate information on the extensive margin of constrained consumers. Results in grey colour indicate that had we not accounted for the borrowing constrained, parameter estimates (higher risk aversion in particular) would overpredict the level of distortions. Information on the extensive margin refines the size of the identified distortions.

  \section{Conclusion}
  
 
Financial constraints at the household level generate distortions to the standard representative agent Euler equation that have important implications for preference identification. This paper proposed moment inequality restrictions that account for these features and has shown theoretically and empirically that incorporating information on the proportion of financially constrained households can substantially improve estimates and subsequent inference. While the aggregate restrictions can cater to some limited types of ex-ante heterogeneity, we can extend the moment inequality approach to accommodate more general types using panel data.



\bibliographystyle{econometrica}
\bibliography{biblio_thesis}

\renewcommand{\theequation}{A.\arabic{equation}}

\setcounter{equation}{0}
 \section{Appendix A} \label{AppA}
 
 This Appendix contains the following material:  (1)  Model extensions that account for multiple assets and transaction costs and ex-ante heterogeneity, (2) Lemmas  \ref{nr} and  \ref{over} and their proofs as well as proofs of  results  \eqref{eq:no_surv_mom} and \eqref{eq:surv_mom} and Proposition \ref{over_surv} in the paper, (3)  An example of a wealth distribution with a point mass at the borrowing constraint, (4) Additional Simulation Evidence,  (5) Details on imputing $B_{t}$ when not directly observed and the mixed frequency model, (6)  Details on Data.

 \subsection{Household Optimization with Multiple Assets and Transaction Costs}
The household chooses to store wealth in $J$ assets  $\left\{a^{j}_{i,t}\right\}_{j=1..J}$ which earn gross returns equal to $R^{j}_{i,t}=1+r^{j}(a^{j}_{i,t},S_{t})$,  while $\psi(a^{j}_{i,t+1},a^{j}_{i,t})$ are the transaction costs for the $j^{th}$ asset. 
 Furthermore, there is a general constraint technology $G_{i}(.)$ that restricts trades between these assets which is increasing in wealth. 
 
 Let $\beta\in (0,1)$ be the discount factor and $U(c_{i,t},l_{i,t};\omega,S_{t})$ the instantaneous utility function, where $\omega$ signifies parameters that are related to preferences  and $S_{t}$ an aggregate state vector. The  household problem is as follows:
 \begin{eqnarray*}
 	&&\max_{\{c_{i,t},l_{i,t},\left\{a^{j}_{i,t+1}\right\}_{j=1..J}\}_{t=0}^{\infty }}\mathbb{E}_{0}\sum_{t=0}^{\infty }\beta
 	^{t}U(c_{i,t},l_{i,t};\omega,S_{t})\\
 	s.t. &  c_{i,t} -  y_{i,t}(l_{i,t})   &= \sum \limits^{J}_{j=1}\left(1+r_{j}(a^{j}_{i,t},S_{t})\right)a^{j}_{i,t} - \displaystyle \sum \limits^{J}_{j=1} a^{j}_{i,t+1} - \sum \limits^{J}_{j=1}\psi(a^{j}_{i,t+1},a^{j}_{i,t}) \\ 
 	& c_{i,t}>0&\text{, } G_{i}\left(\left\{a^{j}_{i,t},a^{j}_{i,t+1}\right\}_{j=1}^{J}\right)\geq 0   \label{eq:model}
 \end{eqnarray*} 
 Denoting the marginal utility of consumption by $U'_{1}(c_{i,t};l_{i,t},\omega,S_{t})$, 
 and marginal transaction costs for $(a^{j}_{i,t},a^{j}_{i,t+1})$ by  $\psi^{j}_{1}$ and $\psi^{j}_{2}$ respectively, the Euler equation for any asset $j$ is distorted by the non-negative Lagrange multipliers on the
 occasionally binding constraint $(\mu_{i,t})$ which are positive when the constraint is binding:
 \footnotesize
 \begin{eqnarray}
 	&& U'_{1}(c_{i,t},l_{i,t};\omega,S_{t})(1+\psi^{j}_{1,i,t+1})\nonumber\\
 	& = & \beta \mathbb{E}_{t}\left(1+r_{j}(a^{j}_{i,t+1},S_{t+1})+\frac{\partial r_{j}(a^{j}_{i,t+1},S_{t+1})}{\partial a^{j}_{i,t+1}}a^{j}_{i,t+1}-\psi^{j}_{2,i,t+1}\right)U'_{1}(c_{i,t+1},l_{i,t+1};\omega,S_{t+1}) \nonumber\\  && +\mu_{i,t}\frac{\partial G_{i}\left(\left\{a^{j}_{i,t},a^{j}_{i,t+1}\right\}_{j=1}^{J}\right)}{\partial a^{j}_{i,t+1}}+\beta\mathbb{E}_{t}\mu_{i,t+1}\frac{\partial G_{i}\left(\left\{a^{j}_{i,t+1},a^{j}_{i,t+2}\right\}_{j=1}^{J}\right)}{\partial a^{j}_{i,t+1}}  \nonumber 
 \end{eqnarray}\normalsize
 which can be simplified to  \footnotesize
 \begin{eqnarray}
  	&&  \quad \quad U'_{1}(c_{i,t},l_{i,t};\omega,S_{t})(1+\psi^{j}_{1,i,t+1})  \\
 	& = &
 	\beta \mathbb{E}_{t}\left(1+r_{j}(a^{j}_{i,t+1}, S_{t+1})+\frac{\partial r_{j}(a^{j}_{i,t+1},S_{t+1})}{\partial a^{j}_{i,t+1}}a^{j}_{i,t+1}-\psi^{j}_{2,i,t+1}\right)U'_{1}(c_{i,t+1},l_{i,t+1};\omega,S_{t+1})\nonumber + \lambda^{j}_{i,t} \nonumber
 \end{eqnarray}
 \normalsize where $\lambda^{j}_{i,t}$ summarizes the distortions for each asset $j$, and is positive as $\mu_{i,t}$ is positive and $G_{i}$ is increasing in wealth. The effective rate or return for asset $j$ is defined as,\small \[r^{e,j}_{i,t+1}:=\frac{r_{j}(a^{j}_{i,t+1}, S_{t+1})+\frac{\partial r_{j}(a^{j}_{i,t+1},S_{t+1})}{\partial a^{j}_{i,t+1}}a^{j}_{i,t+1}-\psi^{j}_{1,i,t+1}-\psi^{j}_{2,i,t+1}}{1+\psi^{j}_{1,i,t+1}}\]\normalsize
 Hence, the individual Euler equation for asset $j$  can be more compactly written as \small
 \begin{eqnarray}
 	U'_{1}(c_{i,t},l_{i,t};\omega,S_{t})& = &
 	\beta\mathbb{E}_{t}U'_{1}(c_{i,t+1},l_{i,t+1};\omega,S_{t+1})\left(1+r^{e,j}_{i,t+1}\right) + {\hat{\lambda}}^{j}_{i,t} 
 \end{eqnarray}
 \normalsize
 \subsection{Ex Ante Heterogeneity}  \vspace{-0.2 in}
 \subsubsection{Heterogeneity in Preferences}
 The proposed approach may be able to capture some but not all forms of ex-ante heterogeneity. In some cases, the utility specification may permit accommodating for preference heterogeneity using aggregate information. \citet{10.1257/aer.104.7.2075} use a similar utility function to the one used in this paper's application but incorporate heterogeneity in dis-utility from work as follows:  \[u(c_{i,t},l_{i,t},\gamma,\eta):= \frac{c_{i,t}^{1-\gamma }-1}{1-\gamma }- e^{\xi_{i}}\frac{l_{i,t}^{1+\eta}}{1+\eta}   \] According to their results, allowing for heterogeneity in this parameter is important for matching the cross sectional joint distribution over wages, hours and consumption. The corresponding intratemporal condition is $
 	e^{\xi_{i}}l^{\eta}_{i,t} = \frac{W_{t}}{P_{t}}c_{i,t}^{-\gamma}
$.
 The distribution of individual states now includes that of preferences. Aggregating,
 \begin{eqnarray*}
 	\int e^{\frac{\xi_{i,t}}{\eta}}l_{i,t} p(s_{i,t},\xi_{i}|S_{t})d (s_{i,t},\xi_{i})   &=& \left(\frac{W_{t}}{P_{t}}\right)^{\frac{1}{{\eta}}}C_{t}^{-\frac{\gamma}{\eta}} \Xi^{lab}_{t} 
 \end{eqnarray*} Using the fact that hours should co-vary negatively with the preference parameter i.e. a higher $\xi_{i,t}$ should be associated with lower labor supply, we obtain another  inequality as follows:
 \begin{eqnarray*}
 	\int e^{\frac{\xi_{i,t}}{\eta}}l_{i,t} p(s_{i,t},\xi_{i}|S_{t})d (s_{i,t},\xi_{i})   &\leq& \int e^{\frac{\xi_{i,t}}{\eta}}p(\xi_{i})d\xi_{i}\int l_{i,t}p(s_{i,t},\xi_{i}|S_{t})d (s_{i,t},\xi_{i})  
 \end{eqnarray*} and hence \vspace{-0.1 in} \begin{eqnarray*}
 	\left(\frac{W_{t}}{P_{t}}\right)^{\frac{1}{{\eta}}}C_{t}^{-\frac{\gamma}{\eta}} \Xi^{lab}_{t}  &\geq& L_{t}\int e^{\frac{\xi_{i,t}}{\eta}}p(\xi_{i})d\xi_{i}
 \end{eqnarray*}
 If we are willing to impose a flexible parametric assumption of the distribution of $\xi_{i}$, $p(\xi_{i};\varphi)$ then the moment condition may be constructed analytically or by simulation, and $\varphi$ can be (set) identified together with the rest of the parameters. Otherwise, the moment $\int e^{\frac{\xi_{i,t}}{\eta}}p(\xi_{i};\varphi)d\xi_{i}$ may be estimated non parametrically, as it is simply a constant.
 Accommodating for preference heterogeneity is therefore possible but context specific.\vspace{-0.1 in}
 \subsubsection{Heterogeneity in Expectations}  
 Inspecting the Euler equation in \eqref{eq:aggeul}, expectations about individual and aggregate states are treated as rational and homogeneous across agents. I next show that once we define expectations about e.g the aggregate state as belonging to some well defined family of deviations from rational expectations based on some observable, then it is in principle possible to accommodate for some form of expectation heterogeneity as well.  
 Let us define the subjective model so that the subjective probability $\hat{p}(S_{t+1}|S_{t},s_{i,t})$ has a finite Kullback-Leibler distance from the true probability model ${p}(S_{t+1}|S_{t})$, that is, $\int ln\left(\frac{\hat{p}(S_{t+1}|S_{t},s_{i,t})}{{p}(S_{t+1}|S_{t})}\right){p}(S_{t+1}|S_{t})dS_{t+1}<\infty$.   
  For simplicity, suppose that one of the aggregate states is future output, and each agent assigns some weight on individual income/unemployment expectations $(y^{e}_{i,t+1})$ in order to form expectations about the aggregate state. An example of such a subjective model is as follows:  \[\frac{\hat{p}(S_{t+1}|S_{t},s_{i,t})}{{p}(S_{t+1}|S_{t})}=e^{\varphi y^{e}_{i,t+1}},\quad y^{e}_{i,t+1}\in s_{i,t}\]  
 Hence, the aggregated expectations in the Euler equation will be equal to\footnote{This is derived as follows: \begin{eqnarray*}
 		&& \int\left[\int(1 + r_{t+1})c^{-\gamma}_{i,t+1}p(s_{i,t+1},S_{t+1}|s_{i,t},S_{t})d(s_{i,t+1},S_{t+1})\right]\hat{p}(s_{i,t}|S_{t})d(s_{i,t})\\
 		&=& \int\left[\int(1 + r_{t+1})e^{\varphi y^{e}_{i,t+1}}c^{-\gamma}_{i,t+1}p(s_{i,t+1},S_{t+1},s_{i,t}|S_{t})d(s_{i,t+1},S_{t+1},s_{i,t})\right]\\
 		&=& \int\left[\int(1+ r_{t+1})c^{-\gamma}_{i,t+1}e^{\varphi y^{e}_{i,t+1}}p(s_{i,t+1},s_{i,t}|S_{t+1},S_{t})d(s_{i,t+1},s_{i,t})\right]p(S_{t+1}|S_{t})d(S_{t+1})\\
 		&=& \mathbb{E}_{t}C^{-\gamma}_{t+1}(1 + r_{t+1})\left[\int \Xi_{i,t+1}e^{\varphi y^{e}_{i,t+1}}p(s_{i,t+1}|S_{t+1})d(s_{i,t+1})\right]
 \end{eqnarray*}}
 \begin{eqnarray*}
 	\mathbb{E}_{t}C^{-\gamma}_{t+1}(1 + r_{t+1})\left[\int \Xi_{i,t+1}e^{\varphi y^{e}_{i,t+1}}p(s_{i,t+1}|S_{t+1})d(s_{i,t+1})\right]
 \end{eqnarray*}
 
 The quantity $\int \Xi_{i,t+1}e^{\varphi y^{e}_{i,t+1}}p(s_{i,t+1}|S_{t+1})d(s_{i,t+1})$ can be estimated using data on consumption and income expectations from the repeated cross sections e.g. $\frac{1}{N}\sum_{i=1..N}\Xi_{i,t+1}(\gamma)e^{\varphi y^{e}_{i,t+1}}$ for each candidate parameter value. Expanding the set of predictors on which the subjective model is based is straightforward.


\subsection{Proofs and auxiliary Lemmas}\label{proofs} 
\begin{proof} of {Results}  \eqref{eq:no_surv_mom} and \eqref{eq:surv_mom}
	\begin{enumerate}[leftmargin=*]
		\item Starting from the conditional moment function which is separable in $\lambda^{o}_{t}$,   \[\mathbb{E}_{t}(g\left(Y_{t},Y_{t+1},\theta\right) -\lambda^{o}_{t})=\lambda^{un}_{t} \vspace{-0.1 in}\] Multiplying both sides by instrument $X_{t}>0$ maintains the same sign. Taking unconditional expectations we conclude. 
		\item Since $\mu_{t} = \kappa_{t}B_{t}$, dividing both sides by $B_{t}$ and taking unconditional expectations using $X_{t}$ we conclude. 		
	\end{enumerate}\vspace{-0.1 in}
\end{proof}  \vspace{-0.2 in}
 \subsubsection{\textbf{Identification Results}}\label{IdRes}
 I first define two notions that will be helpful for the theorems that follow: 
 \begin{defn} Denote moment function $m(Y_{t},Y_{t+1},\theta)$ as \underline{correctly specified} if there exists $\theta_{0}\in \Theta$ such that if $U=0$, $\mathbb{E}m(Y_{t},Y_{t+1},\theta_{0})=0$.\label{correct}
 \end{defn}\vspace{-0.3 in}
 \begin{defn} For some $U\in \mathbb{R}^{\mathfrak{m}}_{+}$, $\theta_{1}(U)$ is \underline{conditionally identified} if there does not exist any other $\theta\in \Theta$, $\theta_{2}(U)$ such that $\mathbb{E}m(Y_{t},Y_{t+1},\theta_{1}(U))=\mathbb{E}m(Y_{t},Y_{t+1},\theta_{2}(U))$. \label{unique} 
 \end{defn}\vspace{-0.2 in} The first ensures that the identified set is not empty, while the second guarantees that more information on the wedge $U$ leads to a smaller identified set for $\theta$. Given these definitions, Lemma \ref{nr} establishes that fixing a value for the wedge $U$, there exists a unique value for $\theta$, defined as $\theta(U)$, that satisfies \eqref{mom_comp_gmm}. Any additional information about $U$ is likely to  make the identified set smaller. 
 \begin{lem} {\textbf{$\Theta_{I}$ in the benchmark case.}}\\
 	Under correct specification of $m(.)$, conditional identification and full rank Jacobian matrix for $\mathbb{E}m(Y_{t},Y_{t+1},\theta)$, 
	\begin{center}$\exists$ bijective mapping $\mathcal{G}:\Theta_{I}=\mathcal{G}^{-1}(\mathcal{U})\cap \Theta$	\end{center} \label{nr}  
 \end{lem} 
 \begin{proof} of Lemma \ref{nr}\\
	Fix some $\underset{n_{\theta}\times 1}{U}\in  (\underline{U},\bar{U}]\equiv \mathcal{U}$. Assuming conditional identification as in Definition 2, there is a unique value of $\theta^{*}$ such that $\mathbb{E}m(Y_{t},Y_{t+1},\theta^{*}) \equiv\mathcal{G}(\theta^{*}) = \underset{n_{\theta}\times 1}{U}$. By the inverse function theorem (see e.g. p.266 in \citet{luenberger1997optimization}), a full rank Jacobian of the moment conditions implies that there is some open set ${\bar{\Theta}}$ that contains $\theta^{*}$ and an open set  $\mathcal{M}$ that contains $\mathcal{G}(\theta^{*})$ such that $\mathcal{G}:\mathcal{M}\to {\bar{\Theta}}$ has a continuous inverse $\mathcal{G}^{-1}:\bar{\Theta}\to \mathcal{M}$. Thus, by definition of the identified set, $\Theta_{I} =\{\theta\in\Theta: \mathcal{G}(\theta)=U,\forall U\in\mathcal{U}\}=\{\theta\in\Theta: \theta=\mathcal{G}^{-1}(U), \forall U\in\mathcal{U}\}=\bigcup_{U\in\mathcal{U}}\left(\mathcal{G}^{-1}(U)\cap \Theta\right)=\mathcal{G}^{-1}(\bigcup_{U\in\mathcal{U}})\cap\Theta={G}^{-1}(\mathcal{U})\cap\Theta$.\footnote{Note that this result can be extended to the case when $\mathfrak{m}> n_{\theta}$: Denoting the first $n_{\theta}$ conditional moments by $q_{1,t}(\theta)$ and the rest by $q_{2,t}(\theta)$ yields that: 
		\begin{eqnarray*}
			\underset{n_{\theta}\times1}{q_{1,t}(\theta)} & = & \mathbb{\mathcal{V}}_{1,t}\in[\underbar{\ensuremath{\mathcal{V}_{1}}},\bar{\mathcal{V}}_{1}]\\
			\underset{\mathfrak{m}-n_{\theta}\times1}{q_{2,t}(\theta)} & = & \mathbb{\mathcal{V}}_{2,t}\in[\underbar{\ensuremath{\mathcal{V}_{2}}},\bar{\mathcal{V}}_{2}]
		\end{eqnarray*}
		which imply the following unconditional moment equalities : 
		\begin{eqnarray*}
			\mathbb{E}((q_{1,t}(\theta)-\mathbb{\mathcal{V}}_{1,t})\otimes \phi_{t}) & = & 0\\
			\mathbb{E}((q_{2,t}(\theta)-\mathbb{\mathcal{V}}_{2,t})\otimes \phi_{t}) & = & 0
		\end{eqnarray*}
				Dropping momentarily the dependence on $\theta$ and on $t$, since $\mathfrak{m}>n_{\theta}$,
		and given a real
		valued matrix $W$  diagonal in $(W_{1 },W_{2
		})$, Lemma \ref{nr} follows through for the \textit{effective} first order conditions:  	\begin{eqnarray*}
			\underset{n_{\theta\times1}}{m}:=	\mathbb{E}J^{T}(W_{1}^{\frac{1}{2}T}q_{1}+W_{2}^{\frac{1}{2}T}q_{2}-\mathcal{V})\otimes \phi & = & 0
		\end{eqnarray*} 
		where the Jacobian $\underset{\mathfrak{m}\times n_{\theta}}{J}$ has full rank.}
  \end{proof} 
 
 \subsubsection{The use of additional information} Suppose now that, as in \eqref{eq:decomp}, each moment function in $m(.)$ can be rewritten as the sum of two components denoted by $m_{1}(\theta )$ and  $m_{2}(\theta )$ whose conditional expectation is positive. For brevity, I have dropped the dependence on the data.  Furthermore, denote by $Q_{1}$ the projection operator when projecting onto the space spanned by $m_{1}(\theta )$ and by  $Q_{1}^{\bot}$ the residual operator. 
 The following proposition specifies that the additional component $m_{2}(\theta)$ has to be less than perfectly correlated with $m_{1}$ in order to put more restrictions on the wedge $U$. 
 \begin{lem} Let $\mathcal{V}$: $\mathbb{E}(m_{1,t}(\theta)+m_{2,t}(\theta) -{\mathcal{V}}) =  0$ and $U:=\mathbb{E}(Q_{1}\mathcal{V})$. 
 	For $\mathcal{U}= \left\{U\in [\underbar{U},\bar{U}]\right\}$ define  \[\mathcal{U}^{c}=\left\{U\in [\underbar{U},\bar{U}] : \mathbb{E}Q_{1}^{\bot}\left({m}_{2}(U)-\mathcal{V}\right)  = 0 \right\}\]
 	Then,  $\Theta^{\prime }_{I}\subset\Theta_{I} \quad \text{iff} \quad \mathcal{U}^{c}\subset \mathcal{U} $.
 	\label{over}
 \end{lem}  
 \begin{proof} of Lemma \ref{over}\\
 The additive structure in \eqref{eq:decomp} generalizes as follows for $n_{\theta}\geq 1$:
 \[\mathbb{E}(m_{1,t}(\theta)+m_{2,t}(\theta) -{\mathcal{V}}) =  0\]
 where  the two sets of conditional moments of $m_{1,t}(\theta)$ and  $m_{2,t}(\theta)$ satisfy: 
 	\begin{eqnarray*}
 		\mathbb{E}_{t}\underset{n_{\theta}\times1}{m_{1,t}(\theta)} & = & \mathbb{\mathcal{V}}_{1,t}\in[\underbar{\ensuremath{\mathcal{V}_{1}}},\bar{\mathcal{V}}_{1}]\\
 		\mathbb{E}_{t}\underset{n_{\theta}\times1}{m_{2,t}(\theta)} & = & \mathbb{\mathcal{V}}_{2,t}\in[\underbar{\ensuremath{\mathcal{V}_{2}}},\bar{\mathcal{V}}_{2}]
 	\end{eqnarray*} and hence 
 	 $\mathcal{V}\in\left[\underbar{\ensuremath{\mathcal{V}}}_{1}+
 \underbar{\ensuremath{\mathcal{V}}}_{2},\quad \bar{\mathcal{V}}_{1}
 +\bar{\mathcal{V}}_{2}\right]$.
These conditions imply the following unconditional moment equalities : 
 	\begin{eqnarray*}
 		\mathbb{E}((m_{1,t}(\theta)-\mathbb{\mathcal{V}}_{1,t})\otimes \phi_{t}) & = & 0\\
 		\mathbb{E}((m_{2,t}(\theta)-\mathbb{\mathcal{V}}_{2,t})\otimes \phi_{t}) & = & 0
 	\end{eqnarray*}
 	\normalsize
  Again, to economize on notation, redefine the moments after multiplication with the instrument $\phi$ and drop dependence on $\theta$ and on $t$.     Since $(m_{1},m_{2})$ are 
 	correlated in population, re-project the sum onto the space spanned by ${m}_{1}$.
 	Since the original sum satisfies the moment condition, then the two
 	orthogonal complements will also satisfy it: 
 	\begin{eqnarray*}	
 		\mathbb{E}\left(Q_{1}\left({m}_{1}+{m}_{2}-\mathcal{V}\right)\right) & = & \mathbb{E}\left({m}_{1}+Q_{1}\left({m}_{2}-\mathcal{V}\right)\right)=0\\
 		\mathbb{E}\left(Q_{1}^{\bot}\left({m}_{1}+{m}_{2}-\mathcal{V}\right)\right) & = & \mathbb{E}\left(Q_{1}^{\bot}\left({m}_{2}-\mathcal{V}\right)\right)=0
 	\end{eqnarray*}

 	As in Lemma \ref{nr}, the first
 	set of restrictions identifies a one to one mapping from $U$ to $\Theta_{I}$ as ${m}_{1}+Q_{1}{m}_{2}$ is a function of $m_{1}$, and therefore $\theta^{*}(U)\equiv \mathcal{G}^{-1}(U)$. Plugging this in the second set of restrictions eliminates dependence on  $m_{1}$ and imposes further restrictions on the domain of variation of  $U$. The admissible set for $U$ is now characterized by \[\mathcal{U}^{c}:=\left\{U \in [\underbar{U},\bar{U}]: \mathbb{E}\left(Q_{1}^{\bot}\left({m}_{2}(U)-\mathcal{V}\right)\right)=0\right\}\] Therefore,  $\exists \theta\in \theta(U): \theta\notin \Theta_{I}(U^{c})$ and consequently $\Theta_{I}'\subset \Theta_{I}$.
 \end{proof}

\normalsize

 \begin{proof}  \label{over_surv_proof}
 	\textbf{of Proposition \ref{over_surv}} \\ 
 	(1):  When $Var(B_{t})>0$, the components $m_{1,t}$ and $m_{2,t}$ in the proof of Lemma \ref{over} are such that $Corr(\lambda^{un}_{t},\lambda^{un}_{t}(B^{-1}_{t}-1))\neq 1$, element-wise.  By Lemma \ref{over} we conclude.\newline
 	(2): Suppose that $\Theta_{I}$ is a singleton. Then it must be that the RHS of \eqref{mom_comp_gmm} is zero: Either $B_{t}=0$ for all $t$, or $p(s_{i,t}|S_{t})$ has unit mass on one agent who is unconstrained, and thus $B_{t}=0$.
 \end{proof}

 \subsection{Wealth Distribution with Point mass at the Borrowing Limit.}$ $
 \begin{figure}[H]
 	\begin{centering}			
 		\includegraphics[scale=0.4]{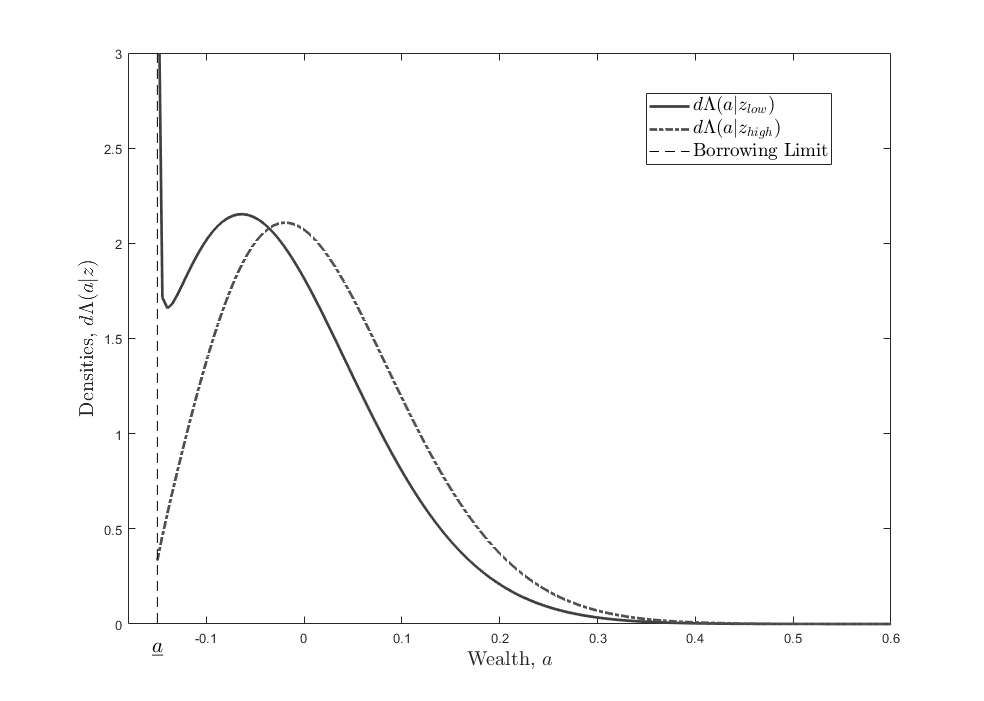}
 		\par
 	\end{centering}   	\protect\caption{Huggett model with a borrowing constraint for wealth $\alpha$ at $\underline{\alpha}<0$, where $\Lambda(\alpha|z_{low})$ is the distribution of the low income type which has a Dirac point mass at $\underline{\alpha}$ (\citet{10.1093/restud/rdab002}).}
 	\label{Achdou} \vspace{-0.1 in}
 \end{figure}
\newpage
\subsection{Monte Carlo Evidence ($500$ replications, $T=200$) } $ $ \label{MCev}

This section presents complementary Monte Carlo evidence on the informativeness of the proportion of borrowing constrained under alternative assumptions about frequency mismatch in the former and presence of measurement error in the latter.  The figures below plot the histogram of the length of the confidence sets (CS) of each parameter in the Krusell-Smith model when the extensive margin is and is not utilized. For classical measurement error in $\hat{B}_{t}$, I add normally distributed noise with standard deviation equal to $10\%$ (Figure \ref{me1}) and $20\%$ (Figure \ref{me2}) of the true standard deviation. Figure \ref{mf1} accommodates for the case in which the consumption cross section is not observed at the same frequency as the aggregate time series i.e. one observation of the cross section for every eight observations of the time series. The missing values are imputed using shape-preserving interpolation. Figure \ref{me1mf1} presents the results for the case that combines both measurement error and missing values.  

\begin{figure}[H]
	\begin{center}			
		\includegraphics[scale=0.5]{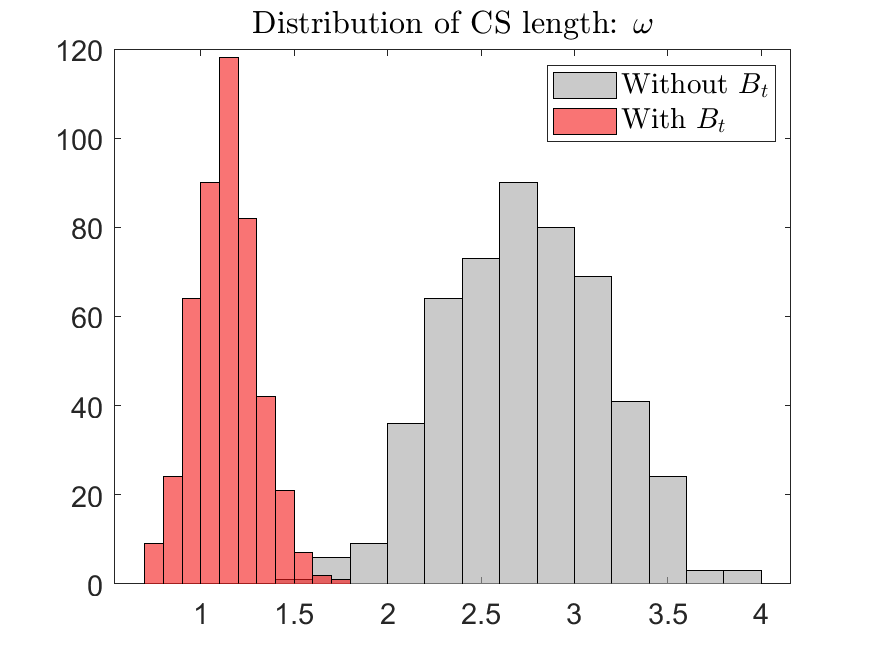}
			\includegraphics[scale=0.5]{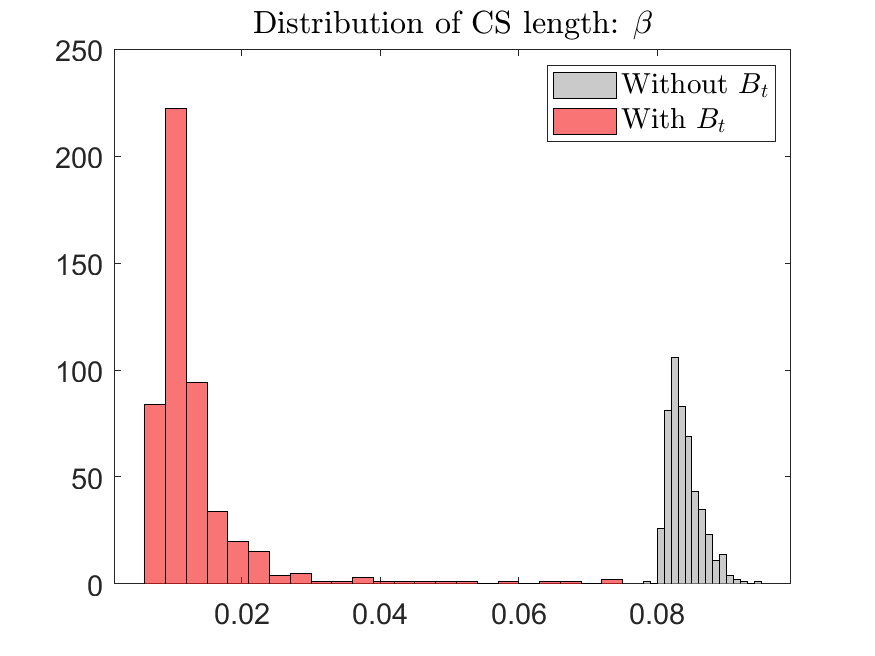}\caption{Benchmark}\label{Bench}
					\end{center}
		
	\end{figure}\begin{figure}[H]
	\begin{center}
		\includegraphics[scale=0.48]{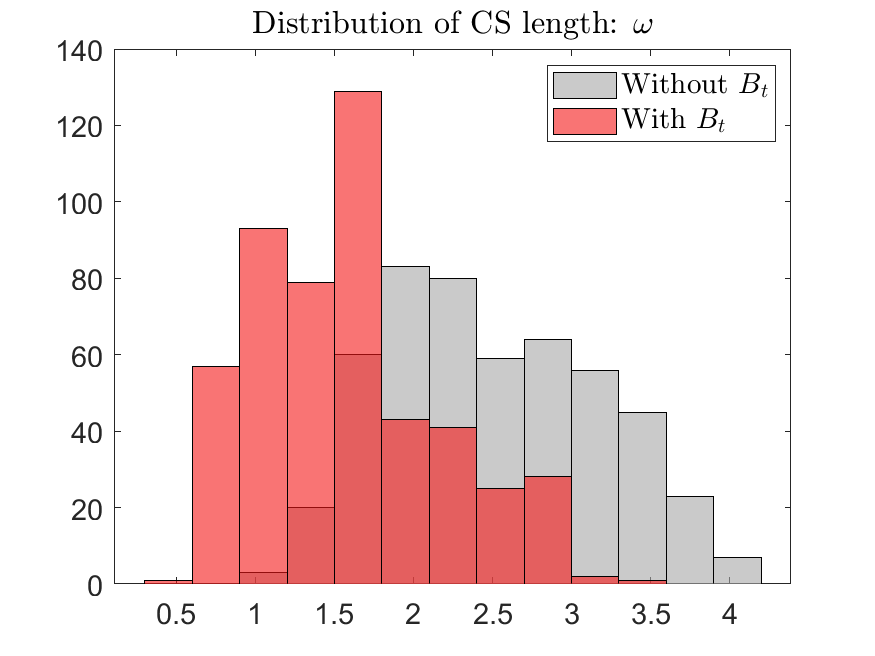}		
		\includegraphics[scale=0.48]{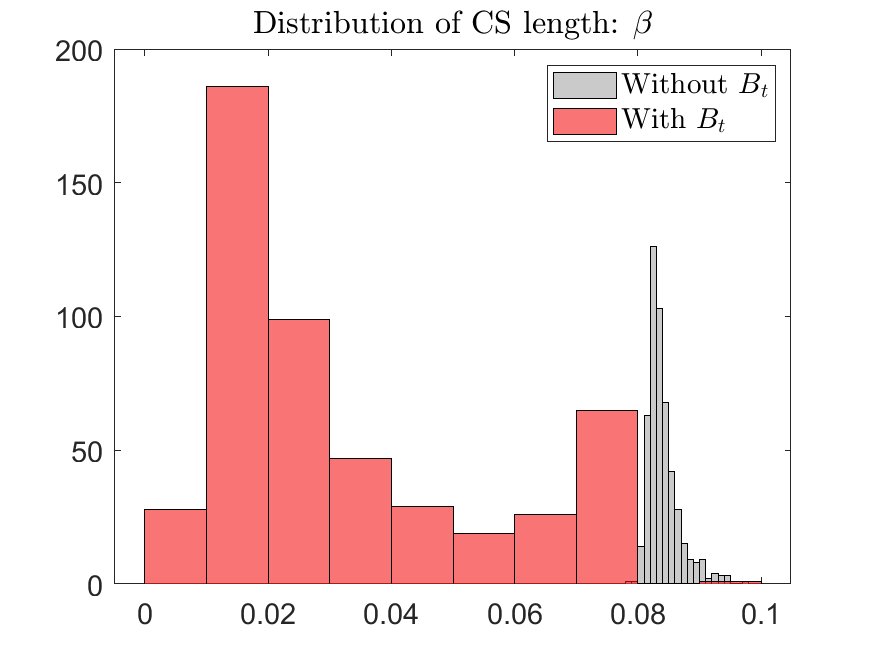}\caption{Measurement error on $B_{t}: \hat{B}_{t}= B_{t} + 0.1\sigma(B_{t})N(0,1)$} \label{me1}
				\end{center}
	
\end{figure}\begin{figure}[H]
\begin{center}
			\includegraphics[scale=0.5]{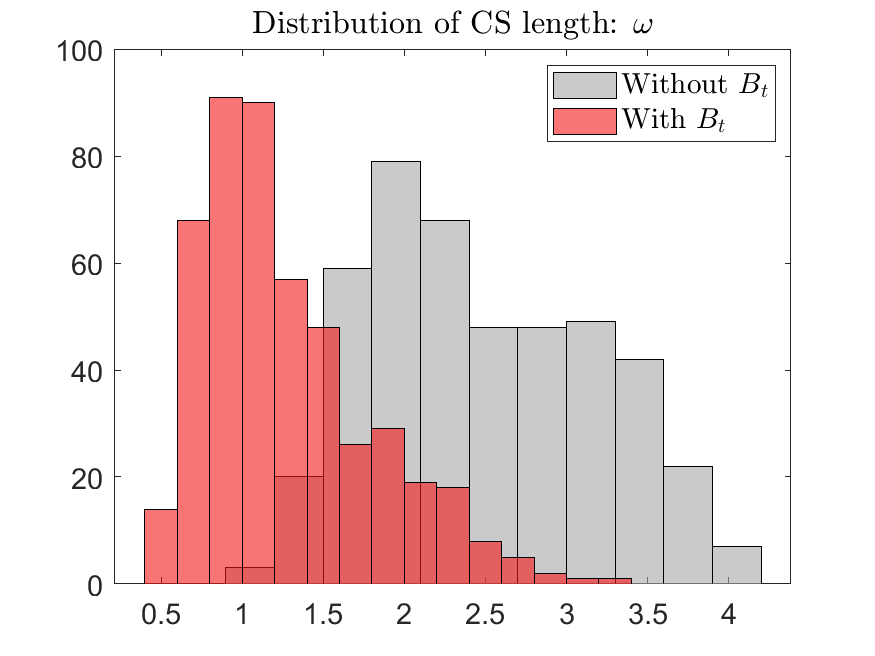}
				\includegraphics[scale=0.5]{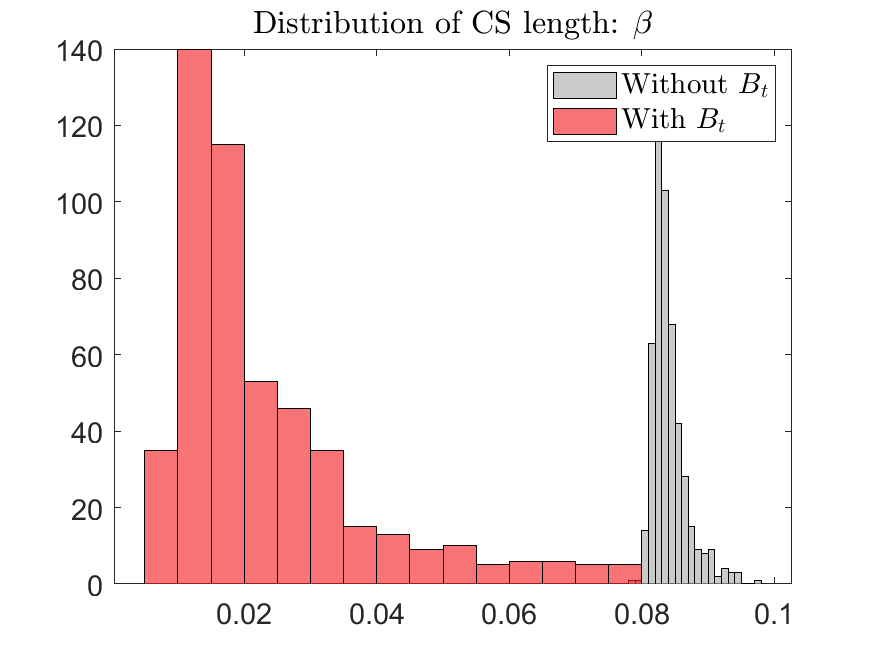} \label{me2}
				\caption{Measurement error on $B_{t}: \hat{B}_{t}= B_{t} + 0.2\sigma(B_{t})N(0,1)$}
						\par
		\end{center}

\end{figure}

\begin{figure}[H]
	\begin{center}			
			\includegraphics[scale=0.5]{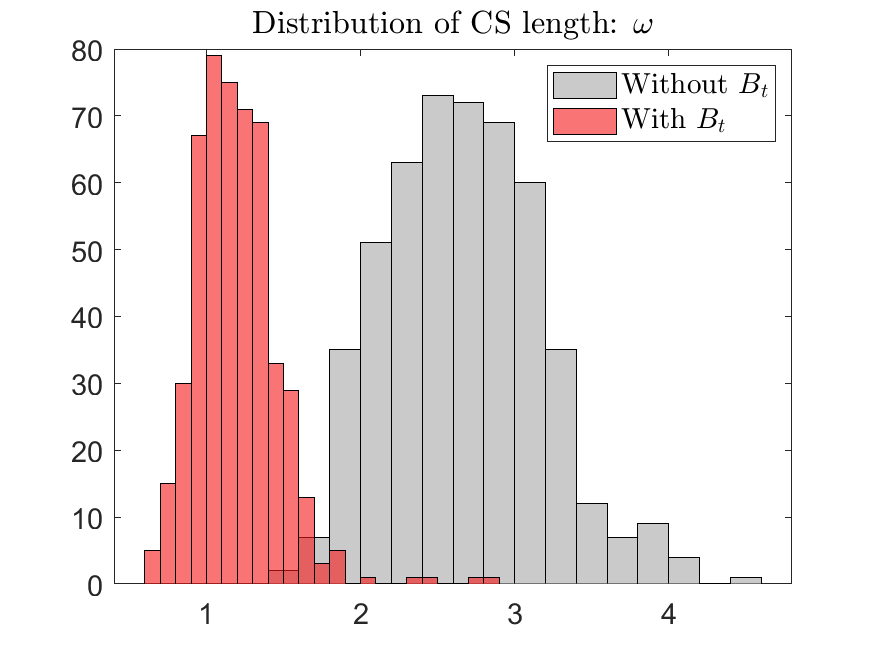}
		\includegraphics[scale=0.5]{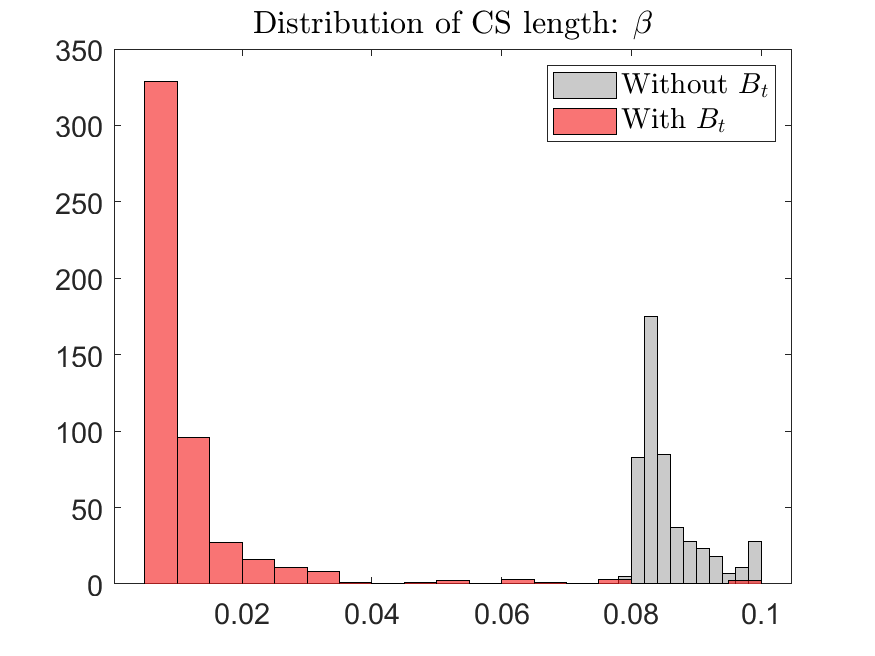}
		\caption{Observing $\{c_{j,t}\}_{j=1..T}$ every eight quarters} \label{mf1}
			\end{center}
	
\end{figure}
		
		\begin{figure}[H]
			\begin{center}
		\includegraphics[scale=0.5]{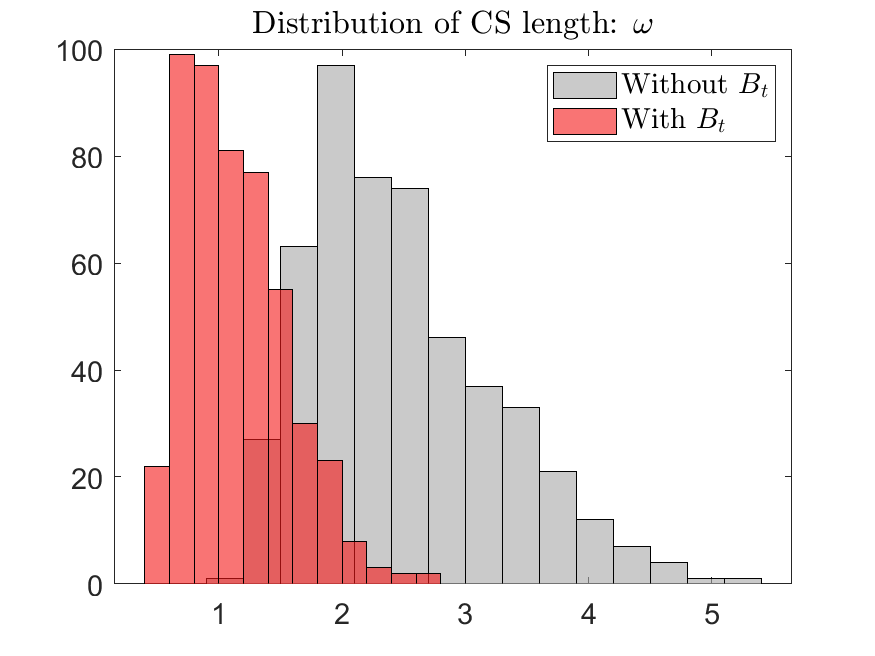}
		\includegraphics[scale=0.5]{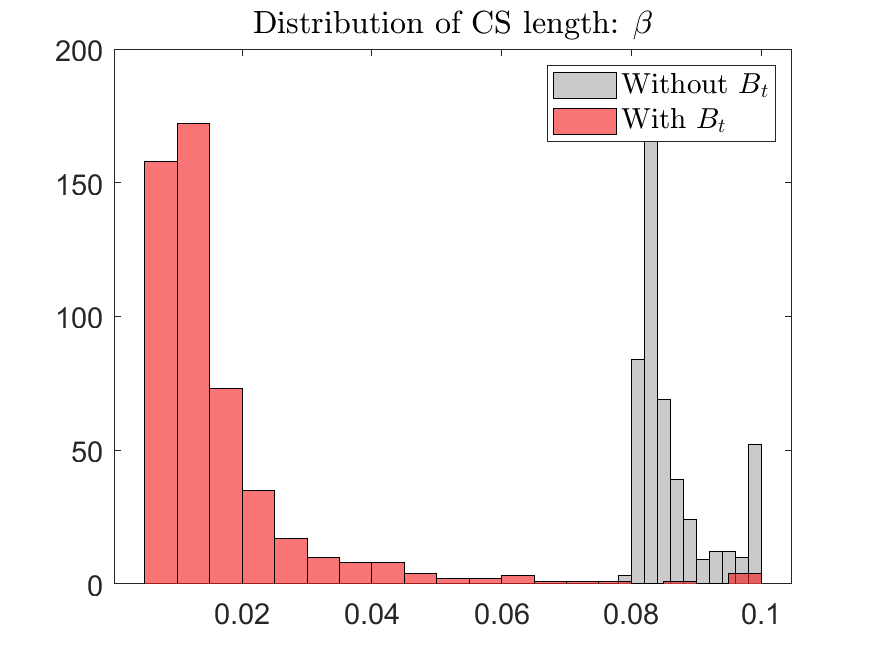}
			\par
	\caption{Measurement error on $B_{t}: \hat{B}_{t}= B_{t} + 0.1\sigma(B_{t})N(0,1)$ and Observing $\{c_{j,t}\}_{j=1..T}$ every eight quarters}  \label{me1mf1}
\end{center}
	
\end{figure}
\normalsize
\subsection{Imputing $B_{t}$ when unobserved} \label{measurements}
In order to obtain a quarterly measure of $B_{t}$, I utilize monthly data from the Business and Consumer Survey (BCS) of the European Commission \citet{BCS}. In particular, the survey asks households about their financial situation, and the possible answers range from "We are saving a lot" to "We are running into debt".\footnote{The exact wording is: "Which of these statements best describes the current financial situation of your household?", and the list of possible answers is: (1) Saving a lot, (2) Saving a little, (3) Just managing to make ends meet on our income, (4) Having to draw on our savings, (5) Running into debt.} In Figure \ref{mixedLCC}, I plot the proportion that corresponds to the latter category (those that claim to be getting indebted), which  increased from $5\%$ in 2008 to $10\%$ by the end of 2016.  

To infer the true proportion of constrained households for the periods in which it is not observed, so all periods between the triennial measures from SHF, I exploit the variation in the BCS data by employing a mixed frequency model, where the quarterly observations on consumers running into debt are linked to the triennial exact measure.   More particularly, let  $B_{t}$ be the proportion of borrowing constrained agents within the quarter and $\Pi_{t}$ be the proportion of consumers that are running into debt. The former are necessarily a share of the latter, as a fraction of the measure of households running into debt are those  that would have liked to take on more debt within the quarter but they were denied by the credit institution, in total or partially. In other words, $B_{t}$ is the share of $\Pi_{t}$ that hits their borrowing limit and are therefore not on their individual Euler equation.\footnote{The households that would select "Running into debt" as a reply in the survey are most likely those that desire to borrow within the period. To relate this to a model, consider the typical budget constraint in the consumption-savings problem with a borrowing constraint at $\underline{a}$:
	\[a_{i,t+1}-a_{i,t} = ra_{i,t} + y_{i,t} - c_{i,t},\quad a_{i,t+1}\geq -\underline{a}\] The survey question identifies the mass of households on  $a_{i,t+1}$ given $a_{i,t}$, where categories (1)-(2) correspond to $\Delta a_{i,t+1}>0$, (3) to  $\Delta a_{i,t+1}=0$ and (4)-(5) on $\Delta a_{i,t+1}<0$. Category 4 identifies the households that need to run down their savings and category 5 identifies the households that need to borrow. A share of the latter will not borrow as much as they would like due to the limit.  The same logic applies to the special case in which $\underline{a}=0$ as long as being indebted includes the boundary $a_{i,t+1}= 0$, but this case is fairly unrealistic.	
	See also Figure \ref{Achdou}  for an example of a wealth distribution based on the \citet{HUGGETT1993953} model with a limit on credit balance, where the distribution for low income types has a point mass at this limit. This is necessarily a fraction of the total mass of households that get indebted within the period and also a fraction of the mass of households that are indebted in general. Whether respondents interpret category 5 in one way or the other does not affect the way it is utilized in the econometric specification.} 
This leads to the following measurement equation: 
\[B_{t}=\zeta_{t}\Pi_{t}\] where $\zeta_{t}\in(0,1)$. 
As shown in Section \ref{meas}, this measurement equation is part of a state space system where $\tilde{B}_{t}$ and $\tilde{\zeta}_{t}$ are the latent states and $\hat{\Pi}^{o}_{t}$ and  $\hat{B}^{o}_{t}$ are corresponding observables at quarterly and triennial frequency respectively. It can be shown that when the exact measure was observed in period $t-1$ $(\tilde{B}^{o}_{t-1})$ but missing in $t$, the (Kalman) filtered measure $(\hat{\tilde{B}}_{t})$ is equal to   
\[\hat{\tilde{B}}_{t} = \rho \tilde{B}_{t-1} +  \frac{\sigma^{2}_{\zeta}+\sigma^{2}_{\Pi}}{2\sigma^{2}_{\zeta}+\sigma^{2}_{\Pi}}\left(v_{\tilde{B}_{t}}-v_{\tilde{\zeta}_{t}}\right)= \rho \tilde{B}_{t-1} +  \frac{\frac{\sigma^{2}_{\zeta}}{\sigma^{2}_{\Pi}}+1}{\frac{2\sigma^{2}_{\zeta}}{\sigma^{2}_{\Pi}}+1}v_{\Pi_{t}}\] \normalsize  
where $(\sigma^{2}_{\zeta},\sigma^{2}_{\Pi})$ are the variances of $\zeta_{t}$ and $\Pi_{t}$ correspondingly, and $\rho$ is the autoregressive parameter for $B_{t}$ in the state equation.

If the proportion $\zeta_{t}$ is not varying much around its long run mean, that is $\sigma^{2}_{\zeta}$ is low, then the mass of constrained agents co-varies with those who are getting indebted. Fluctuations in the latter therefore are very informative for measuring $B_{t}$, and thus the innovation to $\hat{\Pi}^{o}_{t}$ ($v_{\Pi_{t}}$) receives a weight close to one.  Conversely, when $\sigma^{2}_{\zeta}$ is high enough, the variation in the mass of agents that are getting indebted is not as informative for $\tilde{B}_{t}$, and thus the innovation in $\hat{\Pi}^{o}_{t}$ receives a low weight. In this case the filtered measure is then largely based on a forecast from last period's observed proportion of constrained consumers.  

The MLE estimate of $\frac{\sigma^{2}_{\zeta}}{\sigma^{2}_{\Pi}}$ is $0.0425$, yielding a gain parameter equal to $0.9608$. Variation in $\Pi_{t}$ is then informative, albeit not perfectly. The extracted measure is plotted in Figure \ref{mixedLCC}, together with $\Pi^{o}_{t}$ and $B^{o}_{t}$. The mass of the observed credit constrained households (SHF) is lower than the mass of indebted households (BCS), which is in line with the model specification, with the gap between the two increasing after 2009.
 \begin{figure}[H]
	\begin{centering}			
		\includegraphics[scale=0.52]{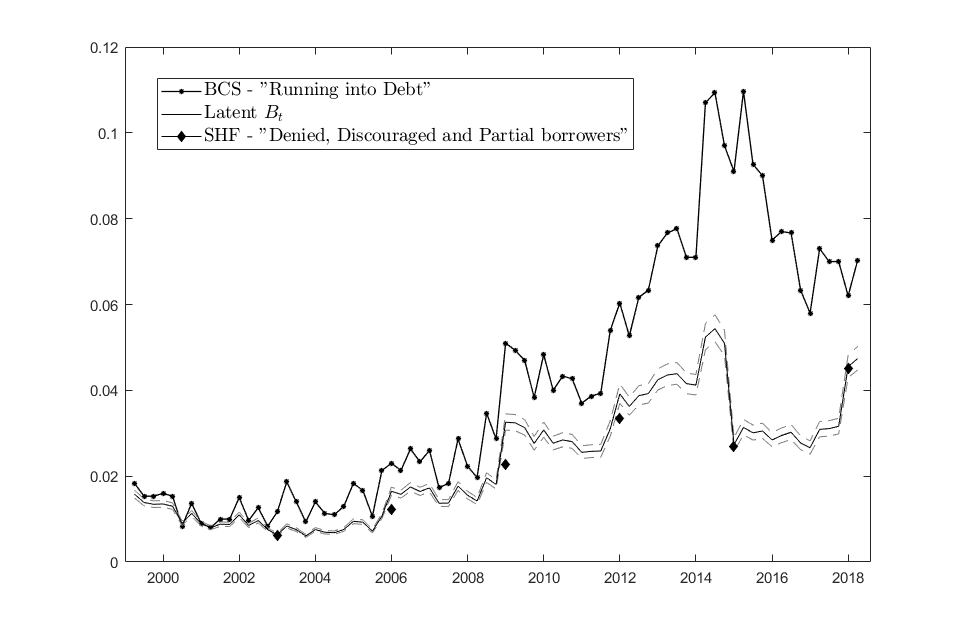}
		\par
	\end{centering}   	\protect\caption{The latent $B_{t}$ is extracted using the SHF measure and exploiting variation in the BCS data.}
	\label{mixedLCC} \vspace{-0.1 in}
\end{figure}

 \subsubsection{Mixed Frequency Model for Quarterly $B_{t}$} \label{mixed}$ $
 \label{meas}
 Using  $\tilde{B}_{t} = \tilde{\Pi}^{o}_{t}+\tilde{\zeta}_{t}$ as a measurement equation, the true (log) quarterly proportion of liquidity constrained consumers is extracted using the following mixed frequency Gaussian linear state space model, where $t=4(j-1)+q$, $t$ is the quarterly observation at year $j$ and $q=\{1,2,3,4\}$ the within year quarter index:\newline
 State Equation (ignoring identities): 
 \begin{eqnarray*}
 	\left( \begin{array}{c}
 		\tilde{b}_{4(j)}\\
 		\tilde{\zeta}_{4(j)}\\
 	\end{array}\right) &=&  \left(\begin{array}{cc}
 		\rho_{b} & 0 \\
 		0 & \rho_{\zeta} 
 	\end{array}\right)  \left( \begin{array}{c}
 		\tilde{b}_{4(j-1)+3}\\
 		\tilde{\zeta}_{4(j-1)+3}\\
 	\end{array}\right) + \left(\begin{array}{c}
 		\nu_{\tilde{b},4(j)}\\
 		\nu_{\tilde{\zeta},4(j)}
 	\end{array}\right)\end{eqnarray*}
 Observation equation:
 \begin{eqnarray*}
 	\left(\begin{array}{c}
 		\tilde{\Pi}^{o}_{4(j)}\\
 		\tilde{B}^{o}_{j}\end{array}\right)&=& \left( \begin{array}{cccccccccccc}
 		1 & 0 & 0 &0 & 0 & 0 &0 & 0 &1& 0 &... & 0\\
 		\frac{1}{8} &\frac{1}{8} & \frac{1}{8} &\frac{1}{8} & \frac{1}{8} & \frac{1}{8} &\frac{1}{8} &\frac{1}{8}&0&0&...&0 \\
 	\end{array} \right)\left( \begin{array}{c}
 		\tilde{b}_{4(j)}\\
 		\tilde{b}_{4(j-1)+3}\\
 		\tilde{b}_{4(j-1)+2}\\
 		\tilde{b}_{4(j-1)+1}\\
 		\tilde{b}_{4(j-1)}\\
 		\tilde{b}_{4(j-2)+3}\\
 		\tilde{b}_{4(j-2)+2}\\
 		\tilde{b}_{4(j-2)+1}\\
 		\tilde{\zeta}_{4(j-1)+3}\\
 		\tilde{\zeta}_{4(j-1)+2}\\
 		\tilde{\zeta}_{4(j-1)+1}\\
 		\tilde{\zeta}_{4(j-2)}\\
 		\tilde{\zeta}_{4(j-2)+3}\\
 		\tilde{\zeta}_{4(j-2)+2}\\
 		\tilde{\zeta}_{4(j-2)+1}\\
 		\tilde{\zeta}_{4(j-2)}
 	\end{array}\right) + v_{4(j)} 
 \end{eqnarray*} 
 where  $(\nu_{\tilde{b},t},\nu_{\tilde{\zeta},t})\sim N(0,diag(\Sigma_{\nu}))$ and $v_{t}\sim N(0,diag(\Sigma_{v}))$. The last diagonal component of $\Sigma_{v}$ is calibrated to the standard error from the estimation of $B_{t}$ in the SHF survey. The rest of the components of the diagonal are calibrated to $1\%$ of the variance of the BCS measure ($\tilde{\Pi}_{t}$).

\newpage
 \subsection{Data} \vspace{-0.2 in}
 \begin{flushleft} 	
 	\begin{table}[H] 
 		\begin{centering}
 			\caption{Sources and Transformations}\vspace{0.3 in}
 			\label{table:data}
 			\resizebox{\columnwidth}{!}{
 				\begin{tabular}{c|c|c|c|}
 					\hline
 					Variable& Data ($1999Q1-2018Q1$, excl. IBEX35, starting $2001Q1$.)& Source & Final Transformation \\ \hline 
 					$C_{t}$& Final Consumption Expenditures, S.A ($C^{tot}$) & FRED (OECD) & $\frac{C^{tot}}{Pop*P}*100$\\ 				$Pop_{t}$ & Active Population: Aged 15-74, S.A.& FRED (OECD)& \\
 					$L_{t}$ & Total Hours Worked , S.A. ($L^{tot}$)& Datastream (National Accounts) & $\frac{H^{tot}}{Pop}$\\ 
 					$P_{t}$ & GDP Implicit Price Deflator& FRED (OECD) &\\
 					$\pi_{t}$ & --& -- & $ln(\frac{P_{t}}{P_{t-1}})$\\
 					$W_{t}$ & Total Employee Compensation , S.A. ($W^{tot}$)&  Thomson Reuters (OECD) & $\frac{W^{tot}}{H}$\\
 					$R_{t+1}$ & EONIA (Quarterly rate, $i_{t}$)&  European Central Bank & $ i_{t} - \pi_{t+1} $\\
 					$R_{g,t+1}$ & Treasury Bills (Quarterly rate, $i_{g,t}$)&  FRED & $ i_{g,t} - \pi_{t+1} $\\
 					$R_{ib,t+1}$ & 3-month Interbank loans  (Quarterly rate, $i_{ib,t}$)&  FRED & $ i_{ib,t} -\pi_{t+1} $\\
 					$R_{e,t+1}$ & IBEX35/IGBM Indices (Quarterly Total return, $i_{e,t}$ ))&  Thomson Reuters & $ i_{e,t} -\pi_{t+1} $\\
 					$(c_{i,t})$& Consumption Expenditures& Survey of Household Finances& See 7.7.1\\				
 					\hline
 			\end{tabular}}
 			\par\end{centering}  
 		
 	\end{table}
 	Excluding durables from aggregate consumption does not seem a reasonable thing to do since we consider liquidity constraints, which are more binding in the case of large purchases e.g. automobiles. As far as housing is concerned, aggregate consumption measures include imputed rents, which can be thought as the service value from housing.  
 \end{flushleft}
 \subsubsection{Imputing Consumption using the Survey of Household Finances}\label{cons_sh}
 Consistent with measurement of the aggregate, annual nominal consumption is computed using survey weights as follows: \begin{eqnarray*}
 	c^{nom,annual}_{i,t} &=& 12*(\textit{Imputed Monthly Rent} [code: 2.31]\\
 	&& + \textit{Car Purchases}/12 [code: 2.74] + \textit{House Durables}/12 [code:p2.70]\\
 	&& +\textit{monthly non durable consumption} [code:p9.1])\end{eqnarray*} 
 Correspondingly, annual consumption is distributed equally to quarterly consumption. 
\newpage
 \subsubsection{Interest Rates} Ex-Post Real interest rates:
 
 \begin{figure}[H]
 	\begin{centering}
 		\includegraphics[scale=0.55]{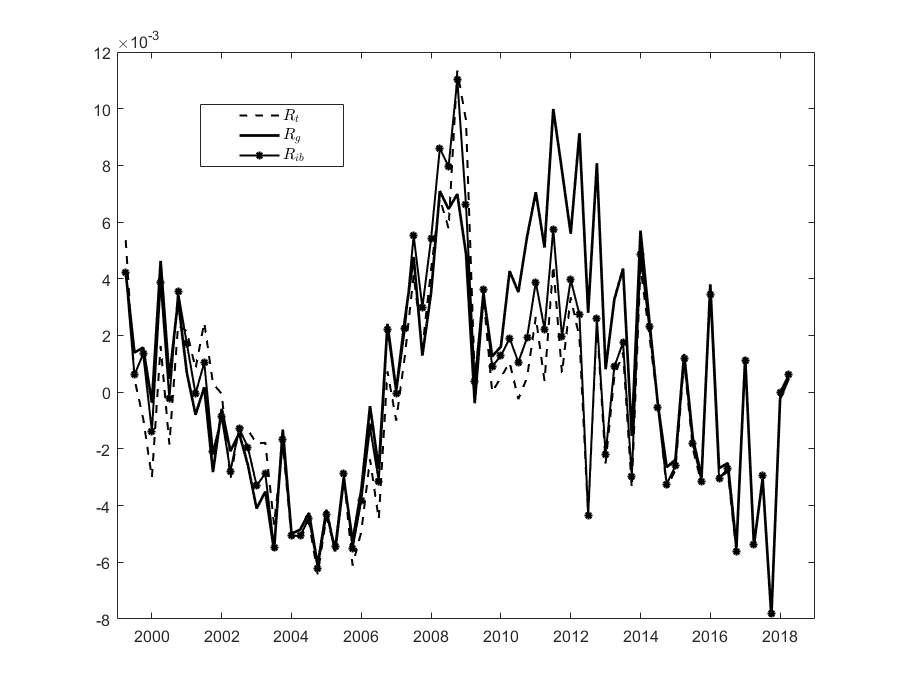} \label{rates}
 		\par
 	\end{centering}  
 	\protect\caption{$ln(R_{t+1}) := ln(1+i_{t}) - ln(1+\pi_{t+1})$}
 	\label{rates}
 \end{figure}

\newpage
 \renewcommand{\theequation}{B.\arabic{equation}}
 
 \setcounter{equation}{0}
 \section{Appendix B-I (Online)}  \label{AppB} 
 
\subsection{Analytical Example on Identification using Micro Data}$ $\\
Recall that conditions in \eqref{eq:genzeldes} and instrument $z_{i,t}(\geq 0)$ generate 
the following unconditional moment inequality, which  does not pin down a unique vector for $(\beta,\omega)$ : 
\begin{eqnarray}
\mathbb{E}\left[(U'_{1}(c_{i,t};l_{i,t},\omega,S_{t})-\beta(1+r^{e}_{i,t+1})U'_{1}(c_{i,t+1};l_{i,t+1},\omega, S_{t+1})) z_{i,t}\right] & \geq & 0 
\end{eqnarray}  
 In order to derive analytical identification regions I simplify the model by assuming exogenous income (so labor supply is not relevant), a riskless interest rate $r^{e}_{i,t+1}=r$, and eliminate $S_{t}$ from the utility function. I adopt the approximation of \citet{Hall1978}\footnote{Please see proof of Corollary 5 in p.987.} using CRRA utility $U(c_{i,t};\gamma):= \frac{c_{i,t}^{1-\gamma }-1}{1-\gamma }$, which implies
the following law of motion for consumption: \vspace{-0.1 in}
\begin{eqnarray}
c_{i,t+1}&=&\rho c_{i,t}+\tilde{\lambda}_{i,t}+\phi_{i,t} +\epsilon_{i,t+1}\label{eq:HallZeldes}
\end{eqnarray} 
where $\tilde{\lambda}_{i,t}\equiv-(U''(c_{i,t};\gamma))^{-1}\lambda_{i,t}$, $U''(c_{i,t};\gamma)$ is the second derivative of the utility function, $\rho=\left(\beta(1+r)\right)^{-\frac{U'(c_{i,t};\gamma)}{c_{i,t}U''(c_{i,t};\gamma)}}=\left(\beta(1+r)\right)^{\frac{1}{\gamma}}$ and $\phi_{i,t}$ is the residual of this approximation which is also set to zero.\footnote{In a recent paper \citet{commault2019} challenged \citet{Hall1978}'s random walk result for consumption in the case of isoelastic utility. In Section \ref{alt} in Appendix B-II, I  clarify why the identification analysis is unaffected by this simplification; in a nutshell, $\phi_{i,t}$ has very similar properties to $\tilde{\lambda}_{i,t}$.} Moreover, $\epsilon_{i,t+1}$ is the rational forecast error such that $\mathbb{E}_{t}\epsilon_{i,t+1}=0$ and summarizes the uncertainty faced by the otherwise rational individual.  Equivalently, consumption growth is equal to   \[\Delta c_{i,t+1}=(\rho-1)c_{i,t}+\tilde{\lambda}_{i,t}+\epsilon_{i,t+1}\] Using $z_{i,t}$ as an instrument and computing its covariance with $\Delta c_{i,t+1}$ leads to 
 \[Cov(z_{i,t},\Delta c_{i,t+1})=(\rho-1)Cov(z_{i,t},c_{i,t})+Cov(z_{i,t},\tilde{\lambda}_{i,t})\] Solving for $\rho$ and using that  higher income relaxes the budget constraint, that is,  $Cov(z_{i,t},\tilde{\lambda}_{i,t})$ is negative, the set of admissible values for $\rho$ is as follows: 
\begin{eqnarray}
\rho_{{ID,1}}&:=& \left(0,1+\frac{Cov(z_{i,t},\Delta c_{i,t+1})}{Cov(z_{i,t}c_{i,t})}\right] = \left(0,1+\rho_{IV}\right] \label{eq:bound1}
\end{eqnarray} where $\rho_{IV}$ is the IV estimate one would obtain when ignoring liquidity constraints.
Avoiding to take a stance on the exact nature of $\tilde{\lambda}_{i,}$, which can involve a vast amount of unobserved information, results in set identification. 
 Nevertheless, we are still able to infer certain facts about the household's preferences and economic behavior and conclusions are valid across alternative environments. 
\subsection{Bounds on Risk Aversion}
Given the set of admissible values for the reduced form coefficient $\rho$, we can recover the implied bounds for the risk aversion parameter, $\gamma_{ID}$, using that $\rho=\left(\beta(1+r)\right)^{\frac{1}{\gamma}}$. Treating $(r,\beta)$ as known, if $\beta(1+r)=1$, risk aversion is unidentified ($\gamma_{ID} =\mathbb{R}_{+}$) as consumption follows a random walk. If $\beta(1+r)\neq 1$, then risk aversion is set identified and has a very intuitive interpretation, as it reflects restrictions on preferences implied by the presence of non-diversifiable income risk. 

In particular, for $\beta(1+r)<1$ the household is impatient and does not accumulate wealth indefinitely. Using \eqref{eq:bound1}, since income and consumption growth are negatively correlated {see e.g. \citet{Deaton}}, risk aversion is bounded by above and the set of values consistent with the data is: \vspace{-0.1 in}\small \[\gamma_{ID,1}= \left\{\gamma \in \mathbb{R}_{+}: \gamma < \frac{\|\log(\beta(1+r))\|}{\log \left(\frac{Cov(z_{i,t},c_{i,t})}{Cov(z_{i,t},c_{i,t})-\|Cov(z_{i,t},\Delta c_{i,t+1})\|}\right)}\right\} \]\normalsize The stronger the negative correlation,  the lower the upper bound on risk aversion, indicating that the less risk averse household is not accumulating enough wealth to fully insure against income risk.


\subsection{\textbf{The extensive margin as additional information}} Suppose now that we observe the dichotomous response of the household over time to a survey question that asks whether the household is (or expects to be in the near future) financially constrained. An honest household will answer positively whenever $\lambda_{i,t}>0$. 

In Section \ref{deriv} I show that had the data been generated by a model with a restriction on borrowing, the consumption growth process becomes as follows:
\begin{eqnarray} 
	&&\Delta c_{i,t+1} \label{eq:cons_growth}\\
	& =& (\rho-1)c_{i,t}+\epsilon_{i,t+1}+\mathbf{1}(\lambda_{i,t}>0)\left(-{\rho}c_{i,t} + \bar{y}\left( 1- \frac{\rho}{1+r}(1-F_{y_{i}}(x^{\star}_{i}))\right)\right)\nonumber\\
&:=& (\rho-1)c_{i,t}+\epsilon_{t+1}+\mathbf{1}(\lambda_{i,t}>0)\left(\lambda_{1}c_{i,t} + \lambda_{0,i}\right)  \nonumber  \end{eqnarray} 
where $F_{y_{i}}(x^{\star}_{i})$ is the the probability of hitting the constraint, which depends on the distribution of income $(y_{i})$ and cash on hand threshold $x^{\star}_{i}$. The higher this probability is, the higher the savings rate out of income, increasing thereby consumption growth. 

The relevant question to ask is whether we can utilize the survey information without knowing the true data generating process i.e. the form of $\lambda_{i,t}$ in \eqref{eq:cons_growth}, and whether this leads to even more precise estimates of risk aversion. 
First, dichotomous responses to the survey can be used to estimate $\mathbb{P}_{t}(\lambda_{i,t}>0)$.
Therefore, for $v\in \mathbb{R}$ and $\epsilon_{i,t+1}$ with cumulative density $\Phi_{i,t}(.)$, the distribution function of consumption growth is: \footnotesize
\begin{eqnarray} 
\mathbb{P}_{t}(\Delta c_{i,t+1}< v) & =& \Phi_{i,t}(v -({\rho}-1)c_{i,t})\mathbb{P}_{t}(\lambda_{i,t}=0)+\Phi_{i,t}(v -({\rho}-1)c_{i,t}-\lambda_{i,t})\mathbb{P}_{t}(\lambda_{i,t}>0) \nonumber \\
&\leq&\Phi_{i,t}(v -({\rho}-1)c_{i,t})\mathbb{P}_{t}(\lambda_{i,t}=0)+\Phi_{i,t}(v -\tilde{\rho}_{{IV}}c_{i,t})\mathbb{P}_{t}(\lambda_{i,t}>0)  \label{eq:probit4}
\end{eqnarray} \normalsize
where the second line uses the only information available: that is $\lambda_{i,t}>0$, and that the IV regression estimate of  ${\rho}-1$ in \eqref{eq:cons_growth}, $\tilde{\rho}_{{IV}}$, is a lower bound to the true value.\footnote{To see this, notice that  $\tilde{\rho}_{{IV}}={\rho}-1 + {}{{\left({\mathbb{C}ov(c_{i,t},z_{i,t})^{-1}\mathbb{C}ov(z_{i,t},\lambda_{i,t}\mid \lambda_{i,t}>0)}{}\mathbb{P}(\lambda_{i,t}>0)\right)}}  < {\rho}-1 
	$. The bias term is negative as higher income relaxes the constraint and can be verified by the solution of the model, as $\lambda_{1}=-\rho$.}

The simplest way to show that \eqref{eq:probit4} provides additional information is to show that there exists parameter value in  $\rho\in\rho_{{ID,1}}$ that is not consistent with \eqref{eq:probit4}. 
It can be shown that $\tilde{\rho}_{{IV}}$, which is consistent with \eqref{eq:bound1}, does not satisfy \eqref{eq:probit4} when the probability of being constrained is higher than a certain threshold  which depends on the true model:\footnote{ The result is obtained as follows:
	Let $p_{i,t}:=\mathbb{P}_{t}(\lambda_{i,t}>0)$. If $\tilde{\rho}_{{IV}}$ is admissible, evaluating the LHS of the inequality in \eqref{eq:probit4} at the true $\tilde{\rho}=\rho-1$ and the RHS at $\tilde{\rho}_{{IV}}$, and solving for $p_{i,t}$ yields:  \begin{eqnarray*}
		p_{i,t}&\leq& \frac{\Phi_{i,t}(v-\tilde{\rho} c_{i,t})-\Phi_{i,t}(v-\tilde{\rho}_{IV}c_{i,t})}{\Phi_{i,t}(v-\tilde{\rho}c_{i,t})-\Phi_{i,t}(v-\lambda_{0}-(\tilde{\rho}+\lambda_{1})c_{i,t})}:=g(p_{i,t})
	\end{eqnarray*}} \begin{eqnarray}\mathbb{P}_{t}(\lambda_{i,t}>0)\in\left(g_{i,t},1\right) \label{eq:Probbound}\end{eqnarray}  where  \[g_{i,t}:=\frac{\Phi_{i,t}(v-({\rho}-1)c_{i,t})-\Phi_{i,t}(v-\tilde{\rho}_{IV}c_{i,t})}{\Phi_{i,t}(v-({\rho}-1)c_{i,t})-\Phi_{i,t}(v-\lambda_{0,i}-({\rho}-1+\lambda_{1})c_{i,t})}\] 
  
For higher values of the intensive distortion $\lambda_{1}$, e.g. $2\lambda_{1}$ the admissible range (gray shade) becomes larger.   The more severe the distortions, the more likely is that additional information matters.   


    \begin{figure}[H]
  	\begin{center}
  		\includegraphics[scale=0.25]{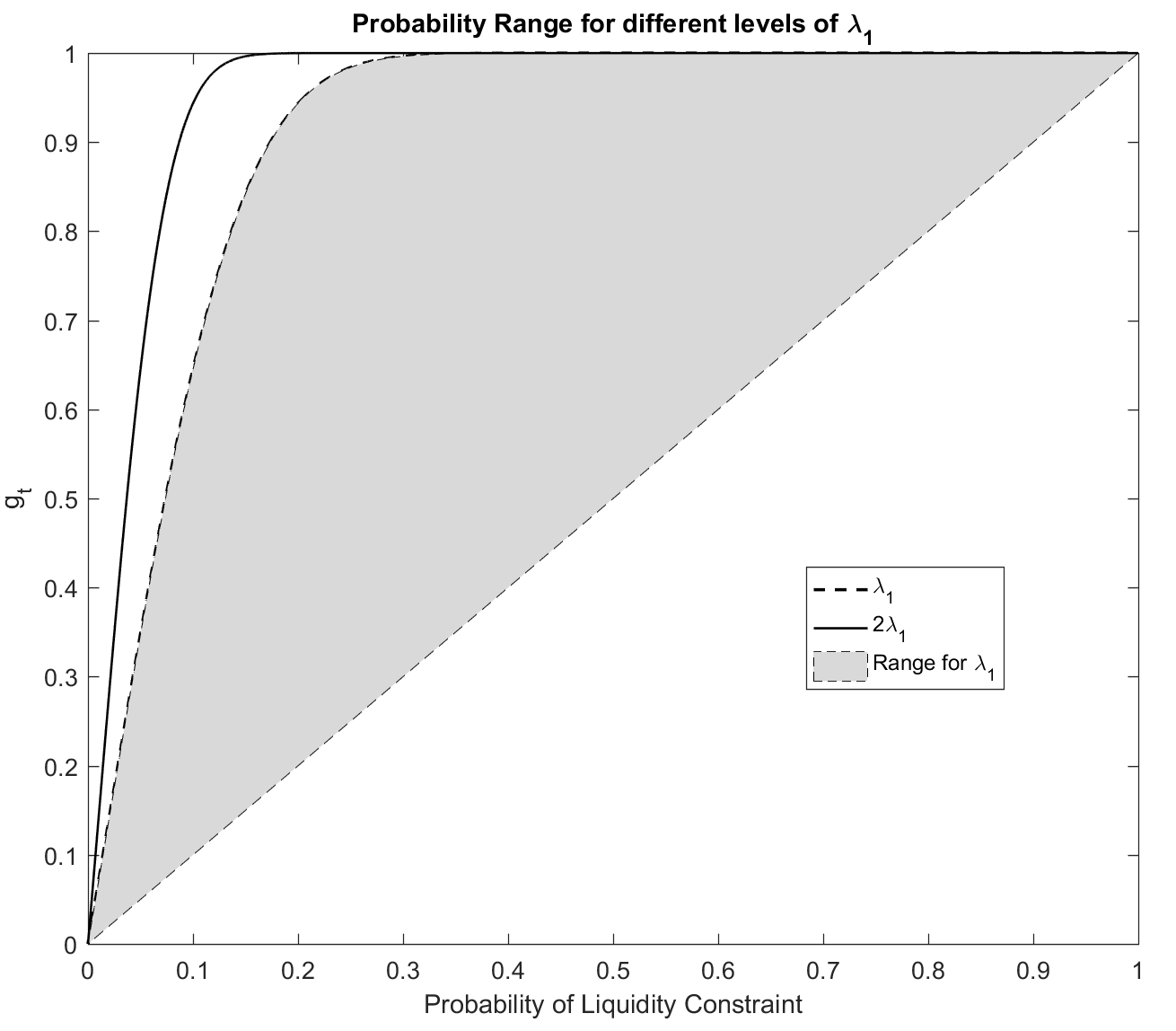} 
  		\protect\caption{\small Range for $v=0,c_{i,t}=1,r=0.05,\sigma_{\epsilon}=0.1,\bar{y}=0$}
  	\end{center} \label{refinement}
  \end{figure}

\section{Appendix B-II (Online)}
This Appendix contains derivations relating  to Appendix B-I.
\subsection{Deriving Consumption Growth} \label{deriv}
Consider the approximate Euler equation that corresponds to \ref{eq:HallZeldes}, where we set $\phi_{i,t}\approx 0$. Moreover, for simplicity, assume that  $y_{i,t}$ is exogenous and iid, with $\mathbb{E}_{t}y_{i,t+1}=\bar{y}$. The household can only save in risk free government bonds and cannot borrow:
\begin{eqnarray*}
	\mathbb{E}_{t}c_{i,t+1}&=&\rho c_{i,t}+\tilde{\lambda}_{i,t} \\
	\alpha_{i,t+1}&=& (1+r)\alpha_{i,t}+ y_{i,t}-c_{i,t} \\
	\alpha_{i,t+1}&\geq&	0
\end{eqnarray*} 
When there are no liquidity constraints, the solution can be obtained as follows:  iterating the budget constraint forward and  imposing the transversality condition $\lim_{j\to\infty}(1+r)^{j}a_{i,t+j}=0$, 
\begin{eqnarray*}
	(1+r)a_{i,t} &=& c_{i,t}-y_{i,t}+\sum_{j=1..\infty}(1+r)^{-j}(c_{i,t+j}-y_{i,t+j})
\end{eqnarray*}
Taking conditional expectations and using the approximate Euler equation, we obtain that 
 $c^{pf}_{i,t}=(1+r-\rho)a_{i,t}+\frac{1+r-\rho}{1+r}(y_{i,t}+\frac{\bar{y}}{r})= \frac{(1+r-\rho)}{1+r}(x_{i,t}+\frac{\bar{y}}{r})$, where $x_{i,t}$ is cash on hand. Due to the linarization, the perfect foresight solution coincides with the solution under uncertainty. Nevertheless, risk aversion does have an effect on $\rho$, and hence the slope of the policy function. 
 
 The only source of concavity for the consumption function is the binding liquidity constraint.
 What I am after here is a piece-wise approximation of the policy function in the presence of constraints, $\tilde{\lambda}_{i,t}\neq 0$, which is analytically tractable. I consider one piece of the approximation to be the policy function below the threshold $x^{\star}$ where the agent becomes hand to mouth, and a linear approximation to the policy function above $x^{\star}$. 
  
 When the household is constrained, necessarily, $c^{con}_{i,t}=x_{i,t}$. Hence, when the household  is unconstrained, its consumption function should become less steep as it is now saving some of its cash on hand. Inspecting the perfect foresight solution, for the consumption function to be continuous at $x^{\star}$,     
 unconstrained behavior should be equal to $c^{unc}_{i,t}=\frac{(1+r-\rho)}{1+r}x_{i,t}$. The difference between $c^{unc}$ and $c^{pf}$, $\frac{(1+r-\rho)}{1+r}\frac{\bar{y}}{r}$, should be the precautionary savings due to the presence of the constraint. These savings are  a share of permanent income $\frac{1+r}{r}\bar{y}$. An  example of a numerical solution that verifies this parallel shift is in Figure \ref{conuncon} below.\footnote{See also \citet{10.1257/jep.15.3.23}.}

\begin{figure}[H]
	\begin{centering}
		\includegraphics[scale=0.3]{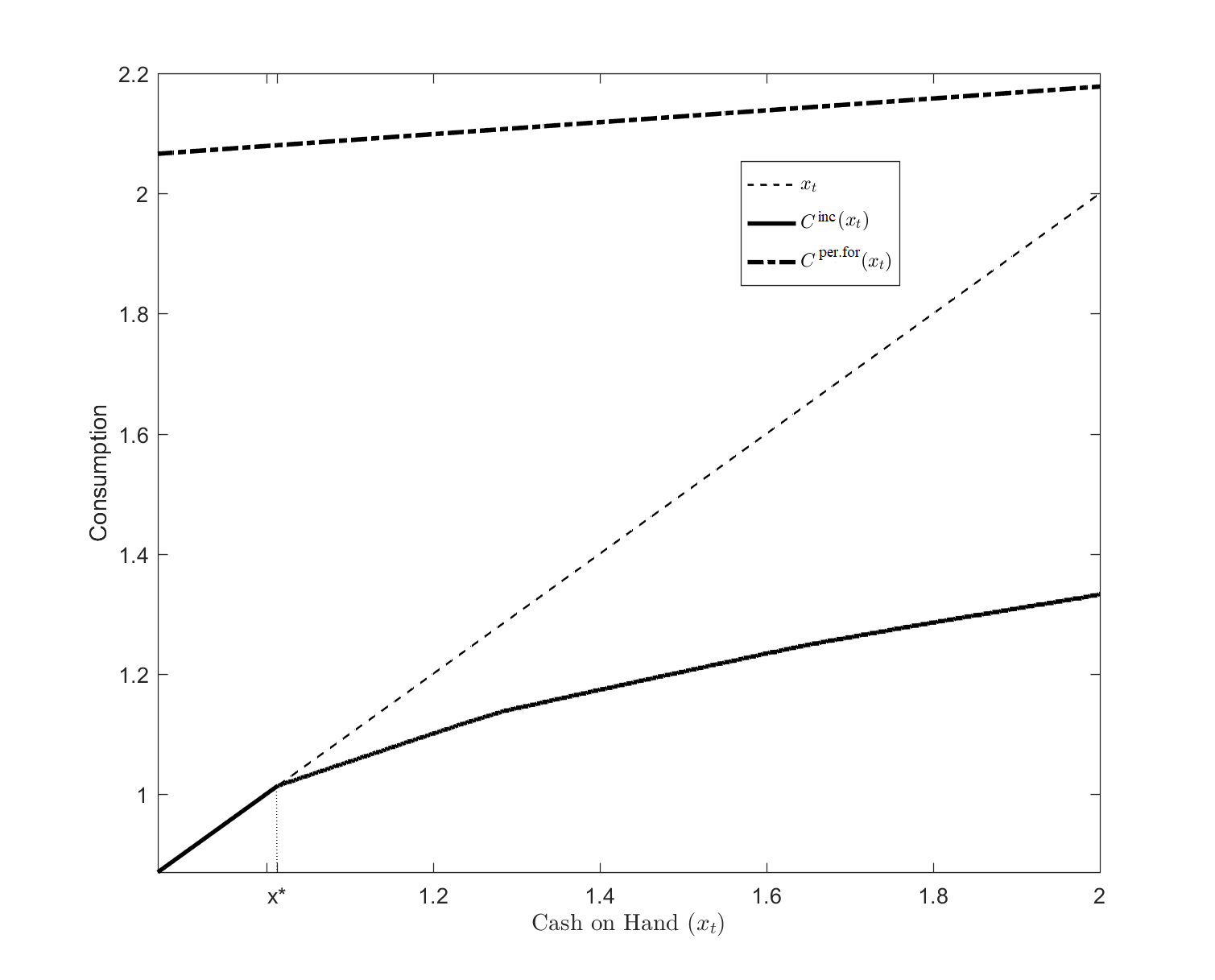} \vspace{-0.1 in}
		\protect\caption{Example of Optimal Consumption: Constrained versus Unconstrained Consumption. Notice also that for large values of $x_{t}$, consumption in incomplete markets with borrowing constraints, $C^{inc}_{t}$, and the perfect foresight solution $C^{per.for}_{t}$ become parallel, as expected. Their difference reflects precautionary savings. }
	\end{centering} \label{conuncon}
\end{figure}

\textbf{Deriving equation \ref{eq:cons_growth}}:
 Using the 		
constrained agent's Euler equation, it has to be the case that $\rho x_{i,t} = \mathbb{E}(c_{i,t+1}|\lambda_{i,t}>0)-\lambda_{i,t}$.  Expectations include the possibilities of being constrained or not in $t+1$, thus:	
\begin{eqnarray*}
	\rho x_{i,t}& = &\mathbb{E}(c_{i,t+1}|\lambda_{i,t}>0)-\lambda_{i,t}\\
	& = &\mathbb{E}(x_{i,t+1}|\lambda_{i,t}>0)\mathbb{P}(\lambda_{i,t+1}>0|\lambda_{i,t}>0)\\
	&  & + \left(1-\frac{\rho}{1+r}\right)\mathbb{E}(x_{i,t+1}|\lambda_{i,t}>0)\mathbb{P}(\lambda_{i,t+1}=0|\lambda_{i,t}>0)-\lambda_{i,t}\\
	& = &\mathbb{E}(x_{i,t+1}|\lambda_{i,t}>0)\left[1-\frac{\rho}{1+r}\mathbb{P}(\lambda_{i,t+1}=0|\lambda_{i,t}>0)\right] -\lambda_{i,t}\\
	& = &\bar{y}\left[1-\frac{\rho }{1+r}\mathbb{P}(\lambda_{i,t+1}=0|\lambda_{i,t}>0)\right] -\lambda_{i,t}\\
	& = &\bar{y}\left[1-\frac{\rho }{1+r}(1-F_{y_{i}}(x^{\star}_{i}))\right] -\lambda_{i,t}
\end{eqnarray*}		
where in the last two lines we use that cash on hand in $t+1$ if the agent is constrained at $t$ is equal to $y_{i,t+1}$ whose expected value is $\bar{y}$, while the probability of being constrained in $t+1$ is $F_{y_{i}}(x^{\star}_{i})$, $x^{\star}$ being the cash on hand threshold. Thus, $\lambda_{i,t}= - \rho x_{i,t} +\bar{y}\left[1-\frac{\rho }{1+r}(1-F_{y_{i}}(x^{\star}_{i}))\right]$ and 
\begin{eqnarray*}\Delta c_{i,t+1}& =& \tilde{\rho} c_{i,t}+\epsilon_{t+1}+\mathbf{1}(\lambda_{i,t}>0)\left(-{\rho}c_{i,t} + \bar{y}\left( 1- \frac{\rho }{1+r}(1-F_{y_{i}}(x^{\star}_{i}))\right)\right)\nonumber\\
	&:=& \tilde{\rho} c_{i,t}+\epsilon_{t+1}+\mathbf{1}(\lambda_{i,t}>0)\left(\lambda_{1}c_{i,t} + \lambda_{0,i}\right)\end{eqnarray*}  
\normalsize
which is equation \ref{eq:cons_growth}.

\subsection{Alternative Derivation to \citet{Hall1978}} \label{alt} The previous analysis ignored the approximation error due to the convexity in marginal utility. I next show that $\lambda_{i,t}$ and $\phi_{i,t}$ have similar properties.  
Following \citet{commault2019}, consider the Euler equation:
\begin{eqnarray*}
	U'\left((\beta(1+r))^{\frac{1}{\gamma}}c_{i,t}\right)&=& \mathbb{E}_{t}U'(c_{i,t+1})+(\beta(1+r))^{-1}\lambda_{i,t}\\
	(\beta(1+r))^{\frac{1}{\gamma}}c_{i,t}&=&U^{'-1}\left(U'(\mathbb{E}_{t}c_{i,t+1}-\phi_{i,t})+(\beta(1+r))^{-1}\lambda_{i,t}\right)
\end{eqnarray*} 
where $\phi_{i,t}>0$ if marginal utility is convex. A first order approximation of the RHS around $\lambda_{i,t}=0$ yields:
	\begin{eqnarray*}
	(\beta(1+r))^{\frac{1}{\gamma}}c_{i,t}&\approx& \mathbb{E}_{t}c_{i,t+1}-\phi_{i,t} - \tilde{\lambda}_{i,t}
\end{eqnarray*} 
where $\tilde{\lambda}_{i,t} = -\frac{(\beta(1+r))^{-1}\lambda_{i,t}}{U''((\beta(1+r))^{\frac{1}{\gamma}}c_{i,t})}>0$. The distortion to random walk comes from both the precautionary savings effect and occasionally binding constraint. Repeating the analysis in \ref{deriv} yields that:\vspace{-0.1 in}	\small
\begin{eqnarray*}
	\lambda_{i,t}+\phi_{i,t}& = &\bar{y}\left[1-\frac{\rho }{1+r}(1-F_{i,y}(x^{\star}_{i}))\right] -\rho x_{i,t}, \quad and
\end{eqnarray*}\vspace{-0.2 in}
\begin{eqnarray*}\Delta c_{i,t+1}& =& \tilde{\rho} c_{i,t}+\epsilon_{t+1}+\mathbf{1}(\lambda_{i,t}>0)\left(-{\rho}c_{i,t} + \bar{y}\left( 1- \frac{\rho }{1+r}(1-F_{y_{i}}(x^{\star}))\right)\right)+ \mathbf{1}(\lambda_{i,t}=0)\phi_{i,t}
\end{eqnarray*} 	\normalsize
Finally, the bound derived in \ref{eq:bound1} is identical because $\phi_{i,t}$ and $Cov(\phi_{i,t},y_{i,t})$  have the same sign as $\lambda_{i,t}$  and $Cov(\lambda_{i,t},y_{i,t})$ respectively.

\end{document}